\theoremstyle{plain}
\newtheorem{theorem}{\protect\theoremname}
  \theoremstyle{plain}
  \theoremstyle{plain}
  \theoremstyle{plain}
   \newtheorem{lemma}{\protect\lemmaname}
  \theoremstyle{remark}
   \newtheorem{assumption}{\protect\assumptionname}
\theoremstyle{assumption}
\theoremstyle{algorithm}  
\newcommand{\newac}{\newacronym}
  \providecommand{\definitionname}{Definition}
  \providecommand{\lemmaname}{Lemma}
  \providecommand{\propositionname}{Proposition}
  \providecommand{\remarkname}{Remark}
\providecommand{\theoremname}{Theorem}
\providecommand{\conjecturename}{Conjecture}
\providecommand{\assumptionname}{Assumption}
\providecommand{\algorithmname}{Algorithm}
\begin{document}

 \title{Security against false data injection attack in cyber-physical systems
 \thanks{The authors are with the Department of Electrical Engineering, University of 
 Southern California. Email: \{achattop,ubli\}@usc.edu}
 \thanks{This work was supported by the following funding sources: ONR N00014-15-1-2550, 
NSF CNS-1213128,
NSF CCF-1718560,
NSF CCF-1410009,
NSF CPS-1446901,
AFOSR FA9550-12-1-0215
}
\thanks{This paper is an extension of our previous conference paper \cite{chattopadhyay2018attack}.}
}

\author{
Arpan~Chattopadhyay \& Urbashi~Mitra  \vspace*{-0.4in}
}

\maketitle
%
%



\ifdefined\SINGLECOLUMN
	\setkeys{Gin}{width=0.5\columnwidth}
	\newcommand{\figfontsize}{\footnotesize} 
\else
	\setkeys{Gin}{width=1.0\columnwidth}
	\newcommand{\figfontsize}{\normalsize} 
\fi

\begin{abstract} 
In this paper,  secure, remote estimation  of a linear  Gaussian process via observations at multiple sensors is considered.  Such a framework is relevant to many cyberphysical systems and internet-of-things applications.  Sensors make sequential measurements that are shared with a fusion center; the fusion center applies a  certain  filtering algorithm to make its estimates.   The challenge is the presence of a few {\em unknown} malicious sensors which can inject anomalous observations to skew the estimates at the fusion center.  The set of malicious sensors may be time-varying.  The problems of malicious sensor detection and secure estimation are considered. First,  an algorithm for secure estimation is proposed. The proposed estimation scheme uses a novel filtering and learning algorithm,  where an optimal filter is learnt over time by using the sensor observations in order to filter out malicious sensor observations while retaining other sensor measurements. Next, a novel detector to detect injection attacks on an unknown sensor subset is developed.      Numerical results demonstrate up to $3$~dB gain in the mean squared error and up to $75 \%$ higher   attack detection probability under a small false alarm rate constraint, against a competing algorithm that requires additional side information.
\end{abstract}
\begin{keywords}
Secure remote estimation, CPS security,  false data injection attack, Kalman filter, stochastic approximation.
\end{keywords}

\vspace{-3mm}

\section{Introduction}\label{section:introduction}
Cyber-physical systems (CPS)   combine the cyber world and the physical world via seamless integration of sensing, control, communication and computation. CPS has widespread applications such as networked  monitoring and control of industrial processes, intelligent transportation systems,  smart grid, and environmental monitoring. Most of these applications critically depend on reliable remote  estimation of a physical process via multiple sensors over a wireless network. Hence,  any malicious attack on sensors can have a catastrophic impact. We focus on {\em false data injection} (FDI) attacks which can be characterized as integrity or deception attacks where the attacker modifies the information sent to the fusion center \cite{mo2009secure, mo2014detecting}. This is in contrast to a {\em denial-of-service} attack where the attacker attempts to block resources for the system (e.g., wireless jamming attack to block bandwidth usage \cite{guan2017distributed}). FDI attacks modify the information either by breaking the cryptography of the data packets or by physical manipulation of the sensors (e.g., putting a heater near a temperature sensor).

The problem of FDI attack and its detection has received recent attention. In  \cite{chen2017optimal},    conditions for    undetectable FDI  attack are developed, and   the minimum number of sensors to be attacked to ensure undetectability is computed. In \cite{guo2017optimal}, a linear deception attack scheme that can fool the popular $\chi^2$ detector is provided. Later,  a new detection algorithm against such linear deception attacks is designed in \cite{li2017detection}, where observations are available from a few known {\em safe} sensor nodes. 
Efficient attack detection and secure estimation schemes for linear Gaussian systems under cyber attack on a static, unknown sensor subset have been developed in \cite{mishra2017secure}, but this detector is not designed to tackle the linear deception attack of \cite{guo2017optimal}.   The   optimal attack strategy to steer the control of CPS to a target value is provided in \cite{chen2016cyber}, while ensuring a constraint on the attack detection probability. Centralized and decentralized attack detection schemes for {\em noiseless} systems have been developed in  \cite{pasqualetti2013attack}. Coding of sensor output for efficient attack detection using $\chi^2$ detector is proposed in \cite{miao2017coding}. Attack-resilient state estimation of a dynamical system with only  {\em bounded} noise has been discussed in \cite{pajic2017attack}.  Sparsity models to characterize the switching location attack in a {\em noiseless} linear system and    state recovery constraints for various attack modes have been described in \cite{liu2017dynamic}.   Attack detection, secure estimation and control in the presence of FDI attack for power  systems are addressed in \cite{manandhar2014detection, liang2017review, hu2017secure}.

In contrast to the prior literature, our current paper addresses the problem of attack detection and secure remote estimation of a linear system with {\em unbounded} Gaussian noise, in the presence of an FDI attack (which could be the FDI attack of \cite{guo2017optimal}), when no safe sensor subset is available. In this paper, we make the following contributions. (i) We   develop a learning algorithm that learns a Kalman-like filter over time for secure estimation in presence of FDI attacks.  The filter gain matrix is updated iteratively over time via simultaneous perturbation stochastic approximation (SPSA), in order to minimize a combination of the estimation error in the absence of attack and the anomaly in estimates returned by various sensor subsets.   The convergence of the learning algorithm is proved by exploiting the properties of the attack scheme, the filtering scheme and SPSA. To our knowledge, this  is a novel contribution to the adaptive filtering literature as well. This algorithm later motivates another low-complexity heuristic which offers up to $3$~dB improvement  in mean squared error (MSE) against another competing algorithm from \cite{li2017detection} that additionally requires a subset of {\em safe} sensors. The algorithms are   extended to handle random packet loss between the sensors and the fusion center.
(ii)  We propose an algorithm for FDI attack detection, that can offer up to $75 \%$ improvement in attack detection probability under a small false alarm constraint, against the detection scheme of \cite{li2017detection} which  additionally requires a subset of {\em safe} sensors.  Our detection  algorithm detects an attack via  anomaly detection between estimates made by various  sensor subsets. We also provide a learning scheme for optimization of our detector subject to a constraint on  false alarm.

The rest of the paper is organized as follows. Preliminaries are discussed in Section~\ref{section:background}. The secure estimation algorithm to combat the FDI attack is  described in Section~\ref{section:secure-estimation}. The attack detection scheme is described in Section~\ref{section:attack-detection}.  Numerical results are provided in Section~\ref{section:numerical-work}, followed by the conclusions in Section~\ref{section:conclusion}. All proofs are provided in the appendices.

\section{Background}\label{section:background}
Throughout this paper, bold capital letters, bold small letters and capital letters with caligraphic font will denote matrices, vectors and sets respectively.
\subsection{Sensing and remote estimation model}\label{subsection:sensing-model}
We consider  a set of smart sensors $\mathcal{N}:=\{1,2,\cdots,N\}$, which are sensing a discrete-time stochastic process $\{\mathbf{x}(t)\}_{t \geq 0}$. The sensors  send their observation directly to a fusion center via {\em error-free} wireless links so that the fusion center can estimate $\mathbf{\hat{x}}(t)$ at each time $t$. 
The physical process $\{\mathbf{x}(t)\}_{t \geq 0}$ (where $\mathbf{x}(t) \in \mathbb{R}^q$)  is a linear Gaussian process that evolves according to the following   equation:
\begin{equation}
\mathbf{x}(t+1)=\mathbf{A} \mathbf{x}(t)+\mathbf{w}(t),  \label{eqn:process-equation}
\end{equation}
where $\mathbf{w}(t)$ is a zero-mean Gaussian noise vector with covariance matrix $\mathbf{Q}$, and is i.i.d. across $t$. The scalar or vector observation made by sensor~$i$  is given by the following observation equation if sensor~$i$ is used in sensing:
\begin{equation}
\mathbf{y}_i(t)=\mathbf{C}_i \mathbf{x}(t)+\mathbf{v}_i(t),  \label{eqn:observation-equation}
\end{equation}
where $\mathbf{C}_i$ is a matrix of appropriate dimension and $\mathbf{v}_i(t)$ is a Gaussian observation noise with covariance matrix $\mathbf{R}_i$. Observation noise $\mathbf{v}_i(t)$ is assumed to be independent across sensors and i.i.d. across time. The pair $(\mathbf{A},\mathbf{Q}^{\frac{1}{2}})$ is assumed to be stabilizable and the pair $(\mathbf{A},\mathbf{C}_i)$ is assumed to be detectable for all $i \in \mathcal{N}$.

The goal of the fusion center   is to minimize the time-average expected mean squared error (MSE) in estimation:
\begin{equation}
 \limsup_{T \rightarrow \infty} \frac{1}{T} \sum_{t=0}^T \mathbb{E}||\mathbf{x}(t)-\mathbf{\hat{x}}(t)||^2 . \label{eqn:time-average-MSE}  
\end{equation}

If all sensors send their observations to the fusion center in real time, then the system is equivalent to a single sensor and a remote estimator with real-time communication. The sensing and observation models can be rewritten as:
\begin{eqnarray}
\mathbf{x}(t+1)&=& \mathbf{A} \mathbf{x}(t)+\mathbf{w}(t)\nonumber\\
\mathbf{y}(t)&=& \mathbf{C} \mathbf{x}(t)+\mathbf{v}(t), \label{eqn:single-sensor-model}
\end{eqnarray}
where $\mathbf{v}(t) \sim N(\mathbf{0},\mathbf{R})$ is the   observation noise and $\mathbf{y}(t) \in \mathbb{R}^{m \times 1}$ (also called $\mathbf{y}_t$) is the complete observation vector. The minimum mean-squared error (MMSE) estimator in this case is a linear filter called Kalman filter (see \cite{anderson1979optimal}):
\begin{eqnarray}
\mathbf{\hat{x}}_{t+1|t}&=& \mathbf{A} \mathbf{\hat{x}}_t \nonumber\\
\mathbf{P}_{t+1|t} &=& \mathbf{A} \mathbf{P}_t \mathbf{A}'+\mathbf{Q} \nonumber\\
\mathbf{K}_{t+1} &=& \mathbf{P}_{t+1 | t} \mathbf{C}' (\mathbf{C} \mathbf{P}_{t+1|t} \mathbf{C}' +\mathbf{R})^{-1} \nonumber\\
\mathbf{\hat{x}}_{t+1} &=& \mathbf{\hat{x}}_{t+1|t} +\mathbf{K}_{t+1} (\mathbf{y}(t+1)-\mathbf{C} \mathbf{\hat{x}}_{t+1|t} )  \nonumber\\
\mathbf{P}_{t+1} &=& (\mathbf{I}-\mathbf{K}_{t+1} \mathbf{C}) \mathbf{P}_{t+1|t}, \label{eqn:kalman-filter-for-single-sensor}
\end{eqnarray}
where $\mathbf{\hat{x}}_{t+1}$ is the MMSE estimate and $\mathbf{P}_{t+1}$ is the the  error covariance matrix for the estimate $\mathbf{\hat{x}}_{t+1}$, provided that the iteration starts from $\mathbf{\hat{x}}_0 \sim N(\mathbf{0},\mathbf{P}_0)$. It has been shown in \cite{anderson1979optimal} that   $\lim_{t \rightarrow \infty} \mathbf{P}_{t+1|t} =\mathbf{\underbar{P}}$ exists and is the unique fixed point to the 
$\mathbf{P}_{t+1|t}$ iteration called the {\em Riccati equation}. Another quantity of interest is the innovation sequence $\mathbf{z}_t:=\mathbf{y}(t)-\mathbf{C} \mathbf{\hat{x}}_{t|t-1}$;  it  was proved in \cite{anderson1979optimal} that $\{\mathbf{z}_t\}_{t \geq 1}$ is a zero-mean Gaussian sequence which is pairwise independent across time and whose covariance matrix in the steady state is $\mathbf{\Sigma}_{\mathbf{z}}:=(\mathbf{C} \mathbf{\underbar{P}}\mathbf{C}'+\mathbf{R})$.

\subsection{False data injection (FDI) attack}\label{subsection:FDI-attack}
Any {\em unknown} subset $\mathcal{A}(t) \subset \mathcal{N}$ of sensors can be under attack at time $t$. Any sensor~$i \in \mathcal{A}(t)$ sends an  observation according to the following attack equation: 
\begin{equation}
\mathbf{y}_i(t+1)=\mathbf{C}_i \mathbf{x}(t)+\mathbf{e}_i (t)+\mathbf{v}_i(t),  \label{eqn:attack-equation}
\end{equation}
where $\mathbf{e}_i(t)$ is an error term injected by the attacker. The goal of the attacker is to insert the false data sequence $\{\mathbf{e}_i(t): i \in \mathcal{A}(t)\}_{t \geq 0}$   so as to maximize the MSE given by  \eqref{eqn:time-average-MSE}. If $\mathcal{A}(t)=\mathcal{A}$ for all $t$, then the attack is called a {\em static attack}, otherwise the attack is called a {\em switching location attack}. We assume that, at most $n_0$ sensors can be under attack at a   time.


{\em The $\chi^2$ detector:}
Since $\mathbf{z}_t \sim N(\mathbf{0}, \mathbf{\Sigma}_{\mathbf{z}})$ under steady state when there is no attack, a natural technique (see \cite{guo2017optimal}, \cite{li2017detection}) to detect any FDI attack is to detect any anomaly in $\{\mathbf{z}_t\}_{t \geq 0}$. This is done by observing the innovation sequence over a pre-specified window of $J$ time-slots, and declaring   an attack at time $\tau$   if  
$\sum_{t=\tau-J+1}^{\tau} \mathbf{z}_t' \mathbf{\Sigma}_{\mathbf{z}}^{-1} \mathbf{z}_t \geq \eta$,  
where $\eta$ is a pre-specified threshould used to tune the false alarm probability.

\begin{figure}[t!]
\begin{centering}
\begin{center}
\includegraphics[height=3cm, width=6cm]{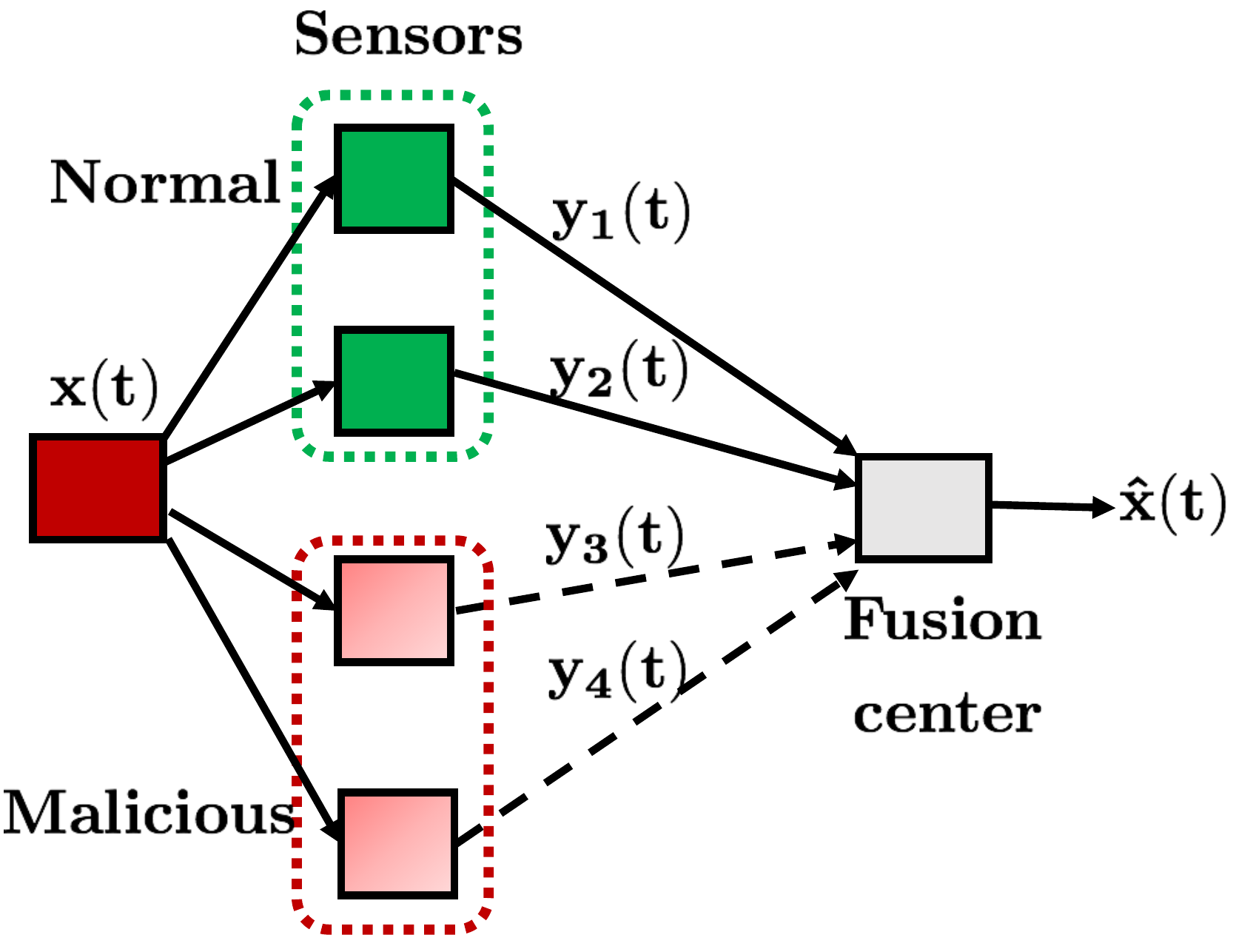}
\end{center}
\end{centering}
\caption{False data injection attack in remote estimation.}
\label{fig:FDI-attack-image}
\end{figure}

In \cite{guo2017optimal}, a linear injection attack to fool the $\chi^2$ detector is 
constructed; at time $t$, the   malicious sensor(s)    modifies the innovation vector as $\mathbf{\tilde{z}}_t=\mathbf{T} \mathbf{z}_t+\mathbf{b}_t$, where $\mathbf{T}$ is a square matrix and $\mathbf{b}_t \sim N (\mathbf{0},\mathbf{\Sigma}_{\mathbf{b}})$ is  i.i.d. Gaussian.  It was shown in \cite{guo2017optimal} that $\mathbf{\tilde{z}}_t \sim N(\mathbf{0},\mathbf{\Sigma}_{\mathbf{\tilde{z}}})$ where $\mathbf{\Sigma}_{\mathbf{\tilde{z}}}=\mathbf{T} \mathbf{\Sigma}_{\mathbf{z}} \mathbf{T}'+\mathbf{\Sigma}_{\mathbf{b}}$. Hence, if we ensure  $\mathbf{\Sigma}_{\mathbf{\tilde{z}}}=\mathbf{\Sigma}_{\mathbf{z}}$, then   $\{\mathbf{\tilde{z}}_t\}_{t \geq 1}$ will have the same distribution $N(\mathbf{0}, \mathbf{\Sigma}_{\mathbf{z}})$ as $\mathbf{z}(t)$, and hence the detection probability of the $\chi^2$ detector will remain unaffected even under the attack.   The estimation error   is maximized when the attacker just inverts the innovation sequence (i.e., when $\mathbf{T}=-\mathbf{I}$ and $\mathbf{b}_t=\mathbf{0}$).    However, the authors of \cite{li2017detection} proposed another efficient  scheme to detect such  attack. The detection algorithm in \cite{li2017detection} assumed the presence of a few {\em safe} sensors; an attack is detected by exploiting any anomaly between the observations made by the safe sensors and other sensors. The assumption of the existence of a set of safe sensors is restrictive, and the design of  efficient attack detection and secure estimation schemes in the  absence of such safe sensors is the topic of our current paper.

{\em Stationary attacks:} In this paper, we consider a special class of attack schemes called {\em stationary attacks}, where, at time $t$, the injected error $\{\mathbf{e}_i(t): i \in \mathcal{A}(t)\}$ is independently chosen from a distribution $p(\cdot|  \mathbf{y}_{\tau}: \tau < t)$. Note that, this attack class contains the class of linear attacks where the attacked sensor subset is either static, or varying with time according to a i.i.d. process or a time-homogeneous Markov chain.

\section{Secure estimation}\label{section:secure-estimation}
In this section, we will provide an algorithm to obtain a reliable estimate $\mathbf{\hat{x}}(t)$   in  the presence of FDI attacks, without explicitly detecting the malicious sensor subset. This algorithm is useful when it is not possible for the system administrator to take necessary measure even upon the detection of an attack (e.g., if a heater is deliberately kept by an attacker near a temperature sensor, it may not be always be  possible to physically remove the heater).

\subsection{Formulation as an on-line optimization problem}
Note that, any sensor observation is ignored if the corresponding entries in the Kalman gain matrix $\mathbf{K}_{t+1}$ in \eqref{eqn:kalman-filter-for-single-sensor} are set to $0$. Ideally, one should compensate for the errors introduced by the malicious sensors, and if done so, the anomalies in estimates from various sensor subsets will be small. However, since the estimation error depends collectively on  the process noise, the observation noise and the noise injected by the attacker, a reasonable technique for reliable estimation would be to dynamically learn an optimal   gain matrix that minimizes  the anomalies in estimates returned by various sensor subsets subject to a constraint on  the MSE in the absence of attack.

Now, let us assume that the attack is a stationary attack (a static attack is a special case). We restrict the discussion   to the class of linear estimators. The estimator we consider is similar to the Kalman filter in \eqref{eqn:kalman-filter-for-single-sensor}, except that the Kalman gain matrix $\mathbf{K}_{t+1}$ is learnt via a stochastic gradient descent scheme to solve the following constrainted optimization problem:

\footnotesize
\begin{eqnarray}
 &&\min_{\{\mathbf{K}_t\}_{t \geq 0}}\limsup_{T \uparrow  \infty}  \frac{1}{T} \sum_{t=0}^{T} \mathbb{E} [ \max_{\mathcal{B} \in 2^{\mathcal{N}}:|\mathcal{B}|=n_0}  ||\mathbf{\hat{x}}_{\mathcal{B}}(t)-\mathbf{\hat{x}}_{\mathcal{B}^c}(t)||^2 ] \nonumber\\
 &s.t. & \limsup_{T \uparrow  \infty} \frac{1}{T} \sum_{t=0}^{T}  \mathbb{E}
[ \mbox{Tr}(\mathbf{P}_t) ] \leq \bar{P}\label{eqn:constrained-problem-for-estimate-correction}
\end{eqnarray}
\normalsize

Here $\mathbf{P}_t$ represents the    estimate of the error covariance matrix  if a   gain matrix sequence $\mathbf{K}_0, \mathbf{K}_1,\mathbf{K}_2, \cdots$  is used for estimation, {\em when there is no attack}. Note that,   $\mathbf{P}_t$ can be calculated iteratively. 
The objective function $\mathbb{E} [ \max_{\mathcal{B} \in 2^{\mathcal{N}}:|\mathcal{B}|=n_0}  ||\mathbf{\hat{x}}_{\mathcal{B}}(t)-\mathbf{\hat{x}}_{\mathcal{B}^c}(t)||^2 ]$ captures the anomaly between the estimates $\mathbf{\hat{x}}_{\mathcal{B}}(t)$ and $\mathbf{\hat{x}}_{\mathcal{B}^c}(t)$ coming from two different sensor subsets $\mathcal{B}$ and $\mathcal{B}^c$, {\em when the restrictions $\mathbf{K}_{t,  \mathcal{B}}$ and $\mathbf{K}_{t, \mathcal{B}^c}$ of $\{\mathbf{K}_t\}_{t \geq 0}$ to the subsets $\mathcal{B}$ and $\mathcal{B}^c$ are used as   gain matrices applied to the observations coming from these sensor subsets.} The constraint $\bar{P}$ can be chosen to be a certain multiple of the time-average MMSE of the system under no attack (which can be computed by running an optimal Kalman filter).

The above constrained problem can be relaxed by a Lagrange multiplier $\lambda \geq 0$ as follows:

\footnotesize
\begin{equation}
\min_{\{\mathbf{K}_t\}_{t \geq 0}} \limsup_{\tau \uparrow  \infty} \frac{1}{\tau} \sum_{t=0}^{\tau} \mathbb{E} \underbrace {   
[ \max_{\mathcal{B} \in 2^{\mathcal{N}}:|\mathcal{B}|=n_0} ||\mathbf{\hat{x}}_{\mathcal{B}}(t)-\mathbf{\hat{x}}_{\mathcal{B}^c}(t)||^2 + \lambda \mbox{Tr}(\mathbf{P}_t)]   }_{:=c(t)}\label{eqn:unconstrained-problem-for-estimate-correction}
\end{equation}
\normalsize

where  $c(t)$ is the single-stage cost at time $t$. 

The following  result tells us how to choose $\lambda$.
\begin{lemma}\label{lemma:existence-of-lambda}
If there exists a $\lambda^* \geq 0$ such that the optimal solution of the unconstrained problem \eqref{eqn:unconstrained-problem-for-estimate-correction} under  $\lambda^*$ meets the constraint in 
\eqref{eqn:constrained-problem-for-estimate-correction} with equality, then the optimal solution of the unconstrained problem \eqref{eqn:unconstrained-problem-for-estimate-correction} under  $\lambda^*$ is optimal for the constrained problem \eqref{eqn:constrained-problem-for-estimate-correction} as well.
\end{lemma}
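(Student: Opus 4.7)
The plan is to recognize this as the classical Lagrangian sufficiency theorem (sometimes called weak duality with complementary slackness), and to apply it directly to the objective and constraint functionals of the problem. For clarity, I would first introduce shorthand for the two time-averaged quantities: let
\[
f(\{\mathbf{K}_t\}) := \limsup_{T \uparrow \infty} \frac{1}{T} \sum_{t=0}^{T} \mathbb{E}\bigl[\max_{\mathcal{B}:|\mathcal{B}|=n_0} \|\mathbf{\hat{x}}_{\mathcal{B}}(t)-\mathbf{\hat{x}}_{\mathcal{B}^c}(t)\|^2\bigr]
\]
and
\[
g(\{\mathbf{K}_t\}) := \limsup_{T \uparrow \infty} \frac{1}{T} \sum_{t=0}^{T} \mathbb{E}[\mbox{Tr}(\mathbf{P}_t)],
\]
so that the constrained problem is $\min f$ subject to $g \leq \bar{P}$, and the unconstrained problem under $\lambda$ is $\min (f + \lambda g)$.

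Next, I would let $\{\mathbf{K}_t^*\}$ denote an optimal solution of the unconstrained problem for $\lambda = \lambda^*$, with the additional hypothesis $g(\{\mathbf{K}_t^*\}) = \bar{P}$. Observe immediately that $\{\mathbf{K}_t^*\}$ is feasible for the constrained problem, since it satisfies the constraint (with equality). The core step is then to pick an arbitrary feasible sequence $\{\mathbf{K}_t\}$ (so $g(\{\mathbf{K}_t\}) \leq \bar{P}$) and exploit optimality of $\{\mathbf{K}_t^*\}$ for the Lagrangian:
\[
f(\{\mathbf{K}_t^*\}) + \lambda^* g(\{\mathbf{K}_t^*\}) \leq f(\{\mathbf{K}_t\}) + \lambda^* g(\{\mathbf{K}_t\}).
\]
Rearranging and substituting $g(\{\mathbf{K}_t^*\}) = \bar{P}$ gives
\[
f(\{\mathbf{K}_t^*\}) \leq f(\{\mathbf{K}_t\}) + \lambda^* \bigl( g(\{\mathbf{K}_t\}) - \bar{P} \bigr) \leq f(\{\mathbf{K}_t\}),
\]
where the last inequality uses $\lambda^* \geq 0$ together with feasibility $g(\{\mathbf{K}_t\}) \leq \bar{P}$. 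Since $\{\mathbf{K}_t\}$ was arbitrary feasible, this shows $\{\mathbf{K}_t^*\}$ is optimal for the constrained problem, completing the argument.

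I do not expect any genuine obstacles here: the result is purely a duality argument and relies only on (a) optimality of $\{\mathbf{K}_t^*\}$ in the unconstrained problem, (b) non-negativity of $\lambda^*$, and (c) the complementary-slackness assumption $g(\{\mathbf{K}_t^*\}) = \bar{P}$. No structural properties of the Kalman-like filter, of the attack model, or of the $\limsup$ operator are needed beyond being well-defined. The only minor care is to ensure that $f$ and $g$ are well-defined as extended real-valued functions so that the inequalities above are meaningful; this is immediate because both are time-averages of non-negative quantities.
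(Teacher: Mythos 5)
Your argument is correct and is exactly the classical Lagrangian sufficiency (weak duality plus complementary slackness) reasoning that the paper itself relies on: the paper gives no explicit proof, deferring to Theorem~1 of its cited technical report, which is this same standard argument. One caveat on your closing claim that no property of the $\limsup$ operator matters: the unconstrained objective \eqref{eqn:unconstrained-problem-for-estimate-correction} is a \emph{single} $\limsup$ of the combined time-average, not the sum $f + \lambda^* g$ of two separate $\limsup$s as in your shorthand, and since $\limsup$ is only subadditive your key inequality $f(\{\mathbf{K}_t^*\}) + \lambda^* g(\{\mathbf{K}_t^*\}) \leq f(\{\mathbf{K}_t\}) + \lambda^* g(\{\mathbf{K}_t\})$ requires the time-averages to actually converge (so that $\limsup = \lim$ and additivity holds) --- which does hold in this setting because the spectral-radius bound on $\mathbf{I}-\mathbf{K}\mathbf{C}$ forces $\mathbf{P}_t$ and the anomaly term to converge, but it is a structural property worth stating rather than dismissing.
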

The proof   is similar to that of \cite[Theorem~$1$]{arpan-globecom2017-techreport}.

\subsection{The proposed learning algorithm}
In this subsection, we propose an iterative learning algorithm to solve \eqref{eqn:constrained-problem-for-estimate-correction}. Our proposed algorithm is based on multi-timescale stochastic approximation (see \cite[Chapter~$6$]{borkar08stochastic-approximation-book}). Note that, Lemma~\ref{lemma:existence-of-lambda} suggests solving \eqref{eqn:unconstrained-problem-for-estimate-correction} for various values of $\lambda$ and then choosing a suitable $\lambda^*$ that meets the constraint in \eqref{eqn:constrained-problem-for-estimate-correction} with equality;   we   can solve \eqref{eqn:unconstrained-problem-for-estimate-correction} in an inner  loop and then vary $\lambda$ in an outer loop to converge to $\lambda^*$. This is achieved by varying $\lambda$ at a slower timescale as the iterations progress, and solving the unconstrained problems at faster timescales. 

Let us consider \eqref{eqn:unconstrained-problem-for-estimate-correction} for a fixed $\lambda$. Due to the unavailability of any closed-form expression of the cost function in \eqref{eqn:unconstrained-problem-for-estimate-correction}, direct computation of a gradient estimate w.r.t. $\mathbf{K}_t$ is not possible. However, we will minimize the cost function \eqref{eqn:unconstrained-problem-for-estimate-correction}  by iteratively learning an optimal   gain matrix $\mathbf{K}^*$ over time, via a stochastic gradient descent (SGD) algorithm.  Hence, we employ simultaneous perturbation stochastic approximation (SPSA, see \cite{spall92original-SPSA}) for this optimization problem. In SPSA, all elements of $\mathbf{K}_t$ are perturbed simultaneously by a random vector in two opposite directions,  the single stage cost function is evaluated for  these two perturbed gain matrices $\mathbf{K}_t^+$ and $\mathbf{K}_t^-$, and a noisy estimate of the gradient of the single stage cost is obtained from this. This noisy estimate of the gradient is then used in   SGD   for asymptotically  minimizing   $\mathbb{E}[c(t)]$.

The proposed SEC (secure estimation) algorithm uses three positive sequences $\{a(t)\}_{t \geq 0}$, $\{b(t)\}_{t \geq 0}$ and $\{d(t)\}_{t \geq 0}$ that satisfy the following conditions: (i) $\sum_{t=0}^{\infty} a(t)=\sum_{t=0}^{\infty} b(t)=\infty$, (ii) $\sum_{t=0}^{\infty} a^2(t)<\infty$, $\sum_{t=0}^{\infty} b^2(t)<\infty$, (iii) $\lim_{t \rightarrow \infty} d(t)=0$,  (iv) $\lim_{t \rightarrow \infty}\frac{a^2(t)}{d^2(t)} < \infty$ and (v) $\lim_{t \rightarrow \infty} \frac{b(t)}{a(t)}=0$. 

The first two conditions are standard for stochastic approximation (see \cite{borkar08stochastic-approximation-book}). The third condition ensures that the gradient estimate is asymptotically unbiased. The fourth condition is a technical condition required for the convergence of SPSA (see \cite{spall92original-SPSA}). The fifth condition ensures that, in our proposed algorithm, the $\mathbf{K}_t$ iteration using step size $a(t)$ runs at a faster timescale than the $\lambda(t)$ iteration using step size $b(t)$.

The algorithm requires a  small $\delta>0$. Let us define $\mathcal{K}:=\{\mathbf{K} \in \mathbb{R}^{q \times m}: ||\lambda_{max}(\mathbf{I}-\mathbf{K} \mathbf{C})|| \leq 1-\delta\}$ where $\lambda_{max}(\mathbf{I}-\mathbf{K} \mathbf{C})$ is the maximum eigenvalue of the matrix $(\mathbf{I}-\mathbf{K} \mathbf{C})$. The algorithm also requires a large number $l>0$.

We define the gain matrix $\mathbf{K}_{t,\mathcal{B}}^+$ which is the same as $\mathbf{K}_t^+$ except that the entries corresponding to the sensors from $\mathcal{B}^c$ are set to $0$, and {\em vice versa}   for the definition of  $\mathbf{K}_{t,\mathcal{B}^c}^-$.

Let $\mathbf{\hat{x}}_{\mathcal{B}}^+(t)$ and $\mathbf{\hat{x}}_{\mathcal{B}^c}^+(t)$ denote the estimates obtained by  using Kalman filters with {\em constant} gain matrices $\mathbf{K}_{t,\mathcal{B}}^+$ and $\mathbf{K}_{t,\mathcal{B}^c}^+$ applied to the observed sequence $\{\mathbf{y}(\tau): 1 \leq \tau \leq t\}$ over the time period $\tau=0,1,2,\cdots,t$. Let us similarly define $\mathbf{\hat{x}}_{\mathcal{B}}^-(t)$ and $\mathbf{\hat{x}}_{\mathcal{B}^c}^-(t)$. In particular,   $\mathbf{\hat{x}}_{\mathcal{B}}^+(t)$ is  computed recursively as follows for $\tau=0,1,\cdots,t-1$:

\begin{equation}
\mathbf{\hat{x}}_{\mathcal{B}}^{(\tau+1)} = \mathbf{A} \mathbf{\hat{x}}_{\mathcal{B}}^{(\tau)} +\mathbf{K}_{t, \mathcal{B}}^+ (\mathbf{y}(\tau+1)-\mathbf{C} \mathbf{A}\mathbf{\hat{x}}_{\mathcal{B}}^{(\tau)} )   \label{eqn:proxy-estimate}
\end{equation}
and setting $\mathbf{\hat{x}}_{\mathcal{B}}^+(t)=\mathbf{\hat{x}}_{\mathcal{B}}^{(t)}$.

The SEC algorithm is described below.

\vspace{2mm}
\hrule
\noindent {\bf The SEC algorithm}
 \hrule
 \vspace{0.5mm}
 \noindent {\bf Input:} $\{a(t)\}_{t \geq 0}$, $\{b(t)\}_{t \geq 0}$, $\{d(t)\}_{t \geq 0}$, $\bar{P}$, $l$,

\noindent {\bf Initialization:} $\mathbf{K}_1$, $\mathbf{P}_0$, $\lambda(0)>0$,   $\mathbf{\hat{x}}_{\mathcal{B}}^+(0)=\mathbf{\hat{x}}_{\mathcal{B}^c}^+(0)=\mathbf{\hat{x}}_{\mathcal{B}}^-(0)=\mathbf{\hat{x}}_{\mathcal{B}^c}^-(0)=\mathbf{\hat{x}}(0)$ for all   $\mathcal{B} \in 2^{\mathcal{B}}: |\mathcal{B}|=n_0$.

\noindent {\bf For $t=1,2,3,\cdots$:}

\begin{enumerate}
\item Collect the observation vector $\mathbf{y}_t$ from all sensors.
\item Compute the estimate $\mathbf{\hat{x}}(t)=\mathbf{A} \mathbf{\hat{x}}(t-1)+\mathbf{K}_t (\mathbf{y}_t- \mathbf{C} \mathbf{A} \mathbf{\hat{x}}(t-1) )$.
\item Pick a random matrix $\mathbf{\Delta}_t$ such that each entry in $\mathbf{\Delta}_t$ is chosen from $\{-1,1\}$ independently with probability $\frac{1}{2}$. 
\item Compute $\mathbf{K}_t^+:=\mathbf{K}_t+d(t) \mathbf{\Delta}_t$ and $\mathbf{K}_t^-:=\mathbf{K}_t-d(t) \mathbf{\Delta}_t$.
\item Compute $\mathbf{\hat{x}}_{\mathcal{B}}^+(t)$, $\mathbf{\hat{x}}_{\mathcal{B}^c}^+(t)$, $\mathbf{\hat{x}}_{\mathcal{B}}^-(t)$ and $\mathbf{\hat{x}}_{\mathcal{B}^c}^-(t)$ for all subsets $\mathcal{B}$ of size $n_0$, using \eqref{eqn:proxy-estimate} or similar update equations.
\item Compute $\mathbf{P}_t^+:=(\mathbf{I}-\mathbf{K}_t^+ \mathbf{C})(\mathbf{A} \mathbf{P}_{t-1}\mathbf{A}'+\mathbf{Q})(\mathbf{I}-\mathbf{K}_t^+ \mathbf{C})'+\mathbf{K}_t^+ \mathbf{R} (\mathbf{K}_t^+)'$ and similarly    $\mathbf{P}_t^-$. Compute $\mathbf{P}_t:=(\mathbf{I}-\mathbf{K}_t \mathbf{C})(\mathbf{A} \mathbf{P}_{t-1}\mathbf{A}'+\mathbf{Q})(\mathbf{I}-\mathbf{K}_t \mathbf{C})'+\mathbf{K}_t \mathbf{R} \mathbf{K}_t'$.
\item Compute $c^+(t):= \max_{\mathcal{B} \in 2^{\mathcal{N}}:|\mathcal{B}|=n_0}   ||\mathbf{\hat{x}}_{\mathcal{B}}^+(t)-\mathbf{\hat{x}}_{\mathcal{B}^c}^+(t)||^2 +\lambda(t) \mbox{Tr}(\mathbf{P}_t^+)$, and similarly   $c^-(t)$.
\item For each entry $\mathbf{K}_t(i,j)$, do the following SPSA update:
\begin{equation}
\mathbf{\tilde{K}}_{t+1}(i,j)=\bigg[\mathbf{K}_t(i,j)-a(t) \times \frac{c^+(t)-c^-(t)}{2 d(t) \mathbf{\Delta}_t(i,j)}\bigg]_{-l}^l  \label{eqn:SPSA-update-equation}
\end{equation}
and project $\mathbf{\tilde{K}}_{t+1}$ onto $\mathcal{K}$ in order to obtain $\mathbf{K}_{t+1}$.
\item Update $\lambda(t+1)=[\lambda(t)+b(t) (\mbox{Tr}(\mathbf{P}_t)-\bar{P})]_0^l$.
\end{enumerate}
\noindent {\bf end}
\label{algorithm:correction-algorithm-learning} 
 \hrule
\vspace{2mm}

Note that, $\mathbf{\tilde{K}}_{t+1}(i,j)$ is projected onto a compact interval $[-l,l]$ to ensure stability of the iteration \eqref{eqn:SPSA-update-equation}. Similarly, $\lambda(t+1)$ is projected onto a compact interval $[0,l]$. 

The spectral radius of $(\mathbf{I}-\mathbf{K}_{t+1} \mathbf{C})$ is maintained to be less than $(1-\delta)$ to ensure that the error covariance matrix $\mathbf{P}_t$ remains bounded. A standard result says  that, the covariance matrix $\mathbf{P}_t$ of the estimation error $\mathbf{\hat{x}}(t)-\mathbf{x}(t)$  varies according to the following recursive equation:

\footnotesize
\begin{equation}
\mathbf{P}_t:=(\mathbf{I}-\mathbf{K}_t \mathbf{C})(\mathbf{A} \mathbf{P}_{t-1}\mathbf{A}'+\mathbf{Q})(\mathbf{I}-\mathbf{K}_t \mathbf{C})'+\mathbf{K}_t \mathbf{R} (\mathbf{K}_t)' ,\label{eqn:evolution-of-error-covariance-for-general-gain-matrix-sequence}
\end{equation}
\normalsize
when the gain matrix $\mathbf{K}_t$ is chosen arbitrarily (not optimally as in \eqref{eqn:kalman-filter-for-single-sensor}). Step~$6$ of SEC is motivated by the above  expression.

 Equation \eqref{eqn:SPSA-update-equation} is a SGD algorithm where a noisy estimate of the gradient of $\mathbb{E}[c(t)]$ w.r.t. $\mathbf{K}_t$ is used instead of the true gradient. The noisy gradient estimate is $\frac{c^+(t)-c^-(t)}{2 b(t) \mathbf{\Delta}_t(i,j)}$. Note that, SPSA requires  computation of $c^+(t)$ and $c^-(t)$ only for two different gain matrices $\mathbf{K}_t^+$ and $\mathbf{K}_t^-$; this allows us to avoid unnecessarily huge computation involved in gradient estimation using coordinatewise perturbation.

\subsection{Convergence of SEC}
Let us consider the problem in \eqref{eqn:unconstrained-problem-for-estimate-correction}. We  define  a function $P(\mathbf{K})$ which is the time-averaged MSE (and also the limiting MSE) $\lim_{T \rightarrow \infty} \frac{1}{T}\sum_{t=1}^T \mathbb{E} [\mbox{Tr} (\mathbf{P}_t)]$ achieved if a Kalman-like linear estimator  is used with a constant gain matrix $\mathbf{K}$ for all $t$, when there is no attack. Also, for a constant gain matrix $\mathbf{K}$ used at all time $t$, let us define by $g(\mathbf{K})$ the limiting (and also time-average) value of the anomaly $\limsup_{t \rightarrow \infty} \mathbb{E} [ \max_{\mathcal{B} \in 2^{\mathcal{N}}:|\mathcal{B}|=n_0}||\mathbf{\hat{x}}_{\mathcal{B}}(t)-\mathbf{\hat{x}}_{\mathcal{B}^c}(t)||^2 ]$, when there is a possible attack. We define $C(\mathbf{K},\lambda):=g(\mathbf{K})+\lambda P(\mathbf{K})$.  
Let us also define $\mathcal{K}_{\lambda}:=\{\mathbf{K} \in [-l,l]^{q \times m} \cap \mathcal{K}: \nabla_{\mathbf{K}} C(\mathbf{K},\lambda)=0 \}$. 

\begin{assumption}\label{assumption:lipschitz-continuity}
For any fixed $\lambda \in [0,l]$,  the function $C(\mathbf{K},\lambda)$ is Lipschitz continuous in $\mathbf{K} \in [-l,l]^{q \times m} \cap \mathcal{K}$.
\end{assumption}

Let us recall that, $$\mathbf{P}_t:=(\mathbf{I}-\mathbf{K}_t \mathbf{C})(\mathbf{A} \mathbf{P}_{t-1}\mathbf{A}'+\mathbf{Q})(\mathbf{I}-\mathbf{K}_t \mathbf{C})'+\mathbf{K}_t \mathbf{R} (\mathbf{K}_t)'$$
 Since the spectral radius of $(\mathbf{I}-\mathbf{K}_t \mathbf{C})$ is less than or equal to $(1-\delta)<1$, the $\mathbf{P}_t$ iteration converges almost surely, and hence the MSE under SEC is uniformly bounded across sample paths. If a constant gain matrix $\mathbf{K}$ is used, it is still easy to prove that $P(\mathbf{K})$ is Lipschitz continuous in $\mathbf{K} \in [-l,l]^{q \times m} \cap \mathcal{K}$. Thus, Assumption~\ref{assumption:lipschitz-continuity} is specifically required for $g(\mathbf{K})$.

\begin{theorem}\label{theorem:convergence-of-SPSA}
Under SEC with a fixed $\lambda(t)=\lambda$ (i.e., $b(t)=0$ for all $t=0,1,2,\cdots$)  and Assumption~\ref{assumption:lipschitz-continuity}, the iterates $\{\mathbf{K}_t\}_{t \geq 1}$ converge almost surely to the set $\mathcal{K}_{\lambda}$, provided that each such stationary point belongs to the interior of 
$[-l,l]^{q \times m} \cap \mathcal{K}$.
\end{theorem}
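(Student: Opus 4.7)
The plan is to exploit the fact that with $\lambda(t)\equiv\lambda$ held fixed, SEC reduces to a single-timescale, projected SPSA scheme on $\mathbf{K}$, and to analyze it via the ODE method of \cite[Chapters~5--6]{borkar08stochastic-approximation-book}. More precisely, I would view \eqref{eqn:SPSA-update-equation} followed by projection onto $\mathcal{K}\cap[-l,l]^{q\times m}$ as tracking the projected gradient flow $\dot{\mathbf{K}}=-\Pi[\nabla_{\mathbf{K}} C(\mathbf{K},\lambda)]$, whose interior equilibria coincide with $\mathcal{K}_\lambda$ by definition.

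The first technical step is to show that the random increment $(c^+(t)-c^-(t))/(2 d(t)\,\mathbf{\Delta}_t(i,j))$ is an asymptotically unbiased estimator of $\partial_{ij}C(\mathbf{K}_t,\lambda)$. I would decompose this into (a) the ideal SPSA quantity $(C(\mathbf{K}_t^+,\lambda)-C(\mathbf{K}_t^-,\lambda))/(2d(t)\mathbf{\Delta}_t(i,j))$, whose conditional expectation, using the $\pm 1$ Bernoulli symmetry of $\mathbf{\Delta}_t$ and a two-sided Taylor expansion (valid under Assumption~\ref{assumption:lipschitz-continuity}), equals $\partial_{ij}C(\mathbf{K}_t,\lambda)+O(d(t)^2)$, plus (b) a transient error $c^\pm(t)-C(\mathbf{K}_t^\pm,\lambda)$. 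Term (b) is controllable because the constant-gain recursion \eqref{eqn:proxy-estimate} is uniformly exponentially stable---the constraint set $\mathcal{K}$ forces $||\lambda_{max}(\mathbf{I}-\mathbf{K}\mathbf{C})||\le 1-\delta$ throughout, so initial-condition effects on $\mathbf{\hat{x}}^\pm_{\mathcal{B}}(t)$ decay geometrically---while the stationarity of the FDI attack combined with \eqref{eqn:process-equation}--\eqref{eqn:observation-equation} makes $\{\mathbf{y}(t)\}$ asymptotically stationary and ergodic. These two ingredients drive $\mathbb{E}[c^\pm(t)\mid\mathbf{K}_t^\pm]\to C(\mathbf{K}_t^\pm,\lambda)$ uniformly over $\mathbf{K}_t^\pm\in\mathcal{K}\cap[-l,l]^{q\times m}$, and the same exponential stability gives the uniform variance bound that, combined with $\limsup_t a^2(t)/d^2(t)<\infty$, makes the additive noise square-summable.

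With the gradient estimate identified as ``true gradient plus vanishing bias plus square-summable martingale-difference noise,'' I would then invoke the Kushner--Clark lemma (or the projected version of the ODE theorem in \cite[Ch.~5]{borkar08stochastic-approximation-book}) to conclude almost sure convergence of $\{\mathbf{K}_t\}$ to the invariant set of the projected ODE above. Assumption~\ref{assumption:lipschitz-continuity} supplies the required regularity of $\nabla_{\mathbf{K}} C(\mathbf{K},\lambda)$, and the interior-point hypothesis ensures that the limiting boundary reflection term is inactive, so the invariant set coincides with $\mathcal{K}_\lambda$ rather than with a larger set of Karush--Kuhn--Tucker points on $\partial([-l,l]^{q\times m}\cap\mathcal{K})$.

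The main obstacle is part (b): showing that the proxy cost $c^\pm(t)$ tracks its steady-state value $C(\mathbf{K}_t^\pm,\lambda)$ even though $\mathbf{K}_t$ is itself drifting with $t$ and the attack---though stationary---renders $\{\mathbf{y}(t)\}$ a correlated process whose conditional law also depends on the filter state. I would handle this by invoking the two-timescale separation built into stochastic approximation: because $a(t)\to 0$ slowly relative to the geometric mixing rate of the constant-gain filter, on any sufficiently long window $\mathbf{K}_t$ is essentially frozen and the joint Markov chain $(\mathbf{x}(t),\mathbf{y}(t),\mathbf{\hat{x}}^\pm_{\mathcal{B}}(t),\mathbf{\hat{x}}^\pm_{\mathcal{B}^c}(t))$ equilibrates, so that a standard stochastic-approximation-with-Markov-noise argument in the spirit of \cite[Chapter~6]{borkar08stochastic-approximation-book} closes the loop.
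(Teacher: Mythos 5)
Your proposal is correct and follows essentially the same route as the paper's own proof: timescale separation between the fast filter/covariance iterates and the slow $\mathbf{K}_t$ update, reduction of the SPSA increment to true gradient plus $o(d(t))$ bias plus martingale-difference noise (with the bias controlled exactly as you say, by the uniform geometric stability of the constant-gain filters enforced by $\mathcal{K}$ together with the stationarity of the attack), and a projected-ODE/stochastic-approximation conclusion in which the interior-point hypothesis deactivates the boundary terms so the limit set is $\mathcal{K}_\lambda$. The paper formalizes your part (b) --- the obstacle you correctly flag --- as two lemmas: one showing $\mathbb{E}\,\mbox{Tr}(\mathbf{P}_t^+ - \mathbf{P}_t^-) = \mathbb{E}\,\mbox{Tr}(P(\mathbf{K}_t^+) - P(\mathbf{K}_t^-)) + o(d(t))$ by exploiting the one-step recursion from the \emph{shared} $\mathbf{P}_{t-1}$, the tracking property $\|\mathbf{P}_{t-1} - P(\mathbf{K}_{t-1})\| \to 0$, and $\mathbb{E}(\mathbf{\Delta}_t) = \mathbf{0}$, and the other showing the anomaly terms equal $g(\mathbf{K}_t^\pm) + o(d(t))$ via exponential mixing of the constant-gain filters re-run over the whole observation history, which is the same resolution you sketch.
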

\begin{proof}
 See Appendix~\ref{appendix:proof-of-convergence-of-SPSA}.
\end{proof}
Theorem~\ref{theorem:convergence-of-SPSA} says that $\{\mathbf{K}_t\}$ converges to the set of local minima of $C(\mathbf{K})$ (i.e. $C(\mathbf{K},\lambda)$ for a given $\mathbf{K}$) in case there is no saddle point that is not a local minimum.

However, SEC varies $\lambda(t)$ at a slower timescale in order to solve the constrained problem~\eqref{eqn:constrained-problem-for-estimate-correction}. The next theorem provides our main result on  the convergence of SEC  in its original form, to the desired solution set  of the constrained problem~\eqref{eqn:constrained-problem-for-estimate-correction}.

Let $\mathcal{S}_{\lambda}$ denote the closure of the convex hull of the set $\{P(\mathbf{K}) -\bar{P}: \mathbf{K} \in \mathcal{K}_{\lambda} \}$. We define by $\bar{\Lambda}$ the collection of  closed connected internally chain transitive invariant sets  of the following differential inclusion (reference  \cite{smirnov2002introduction}):
$$\dot{\lambda}(\tau)\in \mathcal{S}_{\lambda(\tau)}, \tau \in \mathbb{R}^+$$

\begin{theorem}\label{theorem:convergence-of-SPSA-varying-Lagrange-multiplier}
Under SEC and Assumption~\ref{assumption:lipschitz-continuity},  the sequence  $\{(\mathbf{K}_t,\lambda(t))\}_{t \geq 1}$ almost surely converges to the set $ \{(\mathbf{K},\lambda): \mathbf{K} \in \mathcal{K}_{\lambda}, \lambda \in \bar{\Lambda}\}$.
\end{theorem}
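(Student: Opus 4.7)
The plan is to invoke the two–timescale stochastic approximation machinery of \cite[Chapter~$6$]{borkar08stochastic-approximation-book}, combined with a differential–inclusion argument (à la Benaïm–Hofbauer–Sorin, whose framework is the standard setting for the slow recursion because $\mathcal{K}_\lambda$ is not necessarily a singleton). The step–size conditions $\sum_t a(t)=\sum_t b(t)=\infty$, $\sum_t(a^2(t)+b^2(t))<\infty$, and $b(t)/a(t)\to 0$ imply that, asymptotically, the $\mathbf{K}_t$ recursion runs on a strictly faster clock than the $\lambda(t)$ recursion. Hence from the perspective of the fast recursion $\lambda(t)$ appears quasi–static, and from the perspective of the slow recursion the fast iterate $\mathbf{K}_t$ appears equilibrated.

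First I would establish the fast–timescale convergence. Fix a sample path. Because both $\mathbf{K}_t$ and $\lambda(t)$ are projected onto the compact sets $[-l,l]^{q\times m}\cap\mathcal{K}$ and $[0,l]$ respectively, and the spectral radius of $\mathbf{I}-\mathbf{K}_t\mathbf{C}$ is bounded away from $1$, the iterates of both $\mathbf{P}_t$ and $\mathbf{P}_t^\pm$ (and therefore all quantities used in Step~$7$) are uniformly bounded almost surely. Treating $\lambda(t)$ as frozen and rewriting the $\mathbf{K}_t$ recursion in the form of the SPSA update in \eqref{eqn:SPSA-update-equation}, the argument that proves Theorem~\ref{theorem:convergence-of-SPSA} applies verbatim modulo an $o(1)$ drift coming from the slow variation of $\lambda(t)$ (of order $b(t)=o(a(t))$); Lipschitz continuity of $C(\mathbf{K},\lambda)$ in $\lambda$ (inherited from Assumption~\ref{assumption:lipschitz-continuity} applied uniformly, plus Lipschitz continuity of $\lambda\mapsto\lambda P(\mathbf{K})$ on the compact set) lets this perturbation be absorbed. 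Standard two–timescale arguments \cite[Chapter~$6$, Theorem~$2$]{borkar08stochastic-approximation-book} then yield $\mathrm{dist}(\mathbf{K}_t,\mathcal{K}_{\lambda(t)})\to 0$ almost surely.

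Next I would analyse the slow timescale. Substitute the asymptotic equilibration above into the $\lambda(t)$ recursion. The driving term $\mathrm{Tr}(\mathbf{P}_t)-\bar{P}$ is not a single–valued function of $\lambda(t)$ because $\mathcal{K}_{\lambda}$ may contain more than one critical point and different critical points induce different limiting MSEs $P(\mathbf{K})$; this is precisely why the natural limit object is the differential inclusion $\dot{\lambda}(\tau)\in\mathcal{S}_{\lambda(\tau)}$ with $\mathcal{S}_\lambda=\overline{\mathrm{co}}\{P(\mathbf{K})-\bar{P}:\mathbf{K}\in\mathcal{K}_\lambda\}$. I would verify the Marchaud / upper–hemicontinuity conditions on $\lambda\mapsto\mathcal{S}_\lambda$: convexity and compactness are built into the definition, and upper hemicontinuity follows from the closedness of the graph $\{(\mathbf{K},\lambda):\nabla_{\mathbf{K}}C(\mathbf{K},\lambda)=0\}$ on the compact domain (a consequence of the assumption that the stationary points lie in the interior and of Assumption~\ref{assumption:lipschitz-continuity}, which give continuity of $\nabla_{\mathbf{K}} C$), together with continuity of $\mathbf{K}\mapsto P(\mathbf{K})$. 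Together with the boundedness provided by projection onto $[0,l]$ and the square–summability of $b(t)$, the Benaïm–Hofbauer–Sorin theorem for projected stochastic approximation with set–valued mean fields (see \cite{smirnov2002introduction} for the underlying differential inclusion theory) implies that $\lambda(t)$ converges almost surely to an internally chain transitive invariant set of the inclusion, i.e.\ to $\bar\Lambda$.

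Combining the two steps, $\lambda(t)\to\bar\Lambda$ and $\mathrm{dist}(\mathbf{K}_t,\mathcal{K}_{\lambda(t)})\to 0$, and using once more the closed–graph property of $\lambda\mapsto\mathcal{K}_\lambda$ (plus compactness), we obtain the joint convergence $(\mathbf{K}_t,\lambda(t))\to\{(\mathbf{K},\lambda):\mathbf{K}\in\mathcal{K}_\lambda,\lambda\in\bar\Lambda\}$ almost surely, which is the claim. The hardest part will be justifying the set–valued/differential–inclusion step rigorously: verifying upper hemicontinuity of $\lambda\mapsto\mathcal{S}_\lambda$ and showing that the asymptotic tracking error introduced by using $\mathbf{K}_t$ (which only approximates $\mathcal{K}_{\lambda(t)}$) instead of an exact selection from $\mathcal{K}_{\lambda(t)}$ is negligible on the $b(t)$–timescale. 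The remaining components—boundedness of iterates, noise terms being square–summable martingale differences, and the SPSA bias being $o(1)$—are standard and follow the template already used in the proof of Theorem~\ref{theorem:convergence-of-SPSA}.
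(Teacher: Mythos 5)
Your proposal follows essentially the same route as the paper's proof: fast-timescale tracking $D(\mathbf{K}_t,\mathcal{K}_{\lambda(t)})\to 0$ via the Theorem~\ref{theorem:convergence-of-SPSA} argument with $\lambda$ quasi-static, then recasting the $\lambda(t)$ update as a stochastic recursive inclusion with the convexified set $\mathcal{S}_{\lambda}$, verifying exactly the conditions (closed convex values, boundedness, upper hemicontinuity via the closed graph of $\{(\mathbf{K},\lambda):\nabla_{\mathbf{K}}C(\mathbf{K},\lambda)=0\}$) that the paper checks as Conditions~(i)--(iii) of \cite[Section~$5.1$]{borkar08stochastic-approximation-book}, and concluding convergence to $\bar{\Lambda}$ before combining the two timescales. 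Your appeal to the Bena\"im--Hofbauer--Sorin framework is the same machinery underlying \cite[Chapter~$5$, Corollary~$4$]{borkar08stochastic-approximation-book} that the paper invokes, so there is no substantive difference in approach.
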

\begin{proof}
See Appendix~\ref{appendix:proof-of-convergence-of-SPSA-varying-Lagrange-multiplier}.
\end{proof}
The proof of Theorem~\ref{theorem:convergence-of-SPSA-varying-Lagrange-multiplier} suggests that the $\lambda(t)$ update equation asymptotically behaves like a stochastic recursive inclusion (see \cite[Chapter~$5$]{borkar08stochastic-approximation-book}), where at each time, the iteration  involves a set $\{P(\mathbf{K})-\bar{P}: \mathbf{K} \in \mathcal{K}_{\lambda(t)} \}$. However, the proof requires that the set $\{P(\mathbf{K})-\bar{P}: \mathbf{K} \in \mathcal{K}_{\lambda(t)} \}$ should be convex and compact, which may not be true in general. Hence, we consider the set $\mathcal{S}_{\lambda(t)}$, which results in a weaker result on the convergence of SEC in Theorem~\ref{theorem:convergence-of-SPSA-varying-Lagrange-multiplier}.

\subsection{Complexity issues and a heuristic}\label{subsection:complexity-and-heuristic-secure-estimation}
The SEC algorithm requires us to compute $\mathbf{P}_t,\mathbf{P}_t^+, \mathbf{P}_t^-$ and also $\max_{\mathcal{B} \in 2^{\mathcal{N}}:|\mathcal{B}|=n_0}   ||\mathbf{\hat{x}}_{\mathcal{B}}^+(t)-\mathbf{\hat{x}}_{\mathcal{B}^c}^+(t)||^2 $ and $\max_{\mathcal{B} \in 2^{\mathcal{N}}:|\mathcal{B}|=n_0}   ||\mathbf{\hat{x}}_{\mathcal{B}}^-(t)-\mathbf{\hat{x}}_{\mathcal{B}^c}^-(t)||^2 $. The matrix $\mathbf{P}_t$ can be computed iteratively using \eqref{eqn:evolution-of-error-covariance-for-general-gain-matrix-sequence}. However, ideally we should use $P(\mathbf{K}_t^+)$ instead of $\mathbf{P}_t^+$, and $P(\mathbf{K}_t^-)$ instead of $\mathbf{P}_t^-$, in the SPSA update \eqref{eqn:SPSA-update-equation}. Computing $P(\mathbf{K}_t^+)$ will require us to run the iteration \eqref{eqn:evolution-of-error-covariance-for-general-gain-matrix-sequence} $t$ times at iteration $t$; i.e., the computational complexity of $P(\mathbf{K}_t^+)$ grows with $t$. But, in the proof of Theorem~\ref{theorem:convergence-of-SPSA}, we can show that $\mathbf{P}_t^+=P(\mathbf{K}_t^+)+o(d(t))$ and $\mathbf{P}_t^-=P(\mathbf{K}_t^-)+o(d(t))$. This allows us to use $\mathbf{P}_t^+$ and $\mathbf{P}_t^-$ matrices in \eqref{eqn:SPSA-update-equation}, whose computation can be done recursively as  in step~$6$ of SEC. Thus, \eqref{eqn:SPSA-update-equation} is not a standard SPSA acheme.

However, the computation of $\mathbf{\hat{x}}_{\mathcal{B}}^+(t), \mathbf{\hat{x}}_{\mathcal{B}^c}^+(t), \mathbf{\hat{x}}_{\mathcal{B}}^-(t), \mathbf{\hat{x}}_{\mathcal{B}^c}^-(t)$ for all $\mathcal{B}$ of cardinality $n_0$ requires $O(t)$ computation at time $t$; this happens because at each time $t$ we obtain two new matrices $\mathbf{K}_t^+$ and $\mathbf{K}_t^-$, and the estimates $\mathbf{\hat{x}}_{\mathcal{B}}^+(t), \mathbf{\hat{x}}_{\mathcal{B}^c}^+(t), \mathbf{\hat{x}}_{\mathcal{B}}^-(t), \mathbf{\hat{x}}_{\mathcal{B}^c}^-(t)$ for all $\mathcal{B}$ need to be computed using constant gain matrices $\mathbf{K}_t^+$ and $\mathbf{K}_t^-$ applied to the histrory of observations $\{\mathbf{y}(\tau): 1 \leq \tau \leq t\}$. This restricts the possibility of using SEC in practical applications, since   IoT applications will require low-complexity solutions.

One possible heuristic to alleviate this problem is to update $\mathbf{K}_t$ only up to some fixed $T$ time steps, and afterwards use this constant $\mathbf{K}_t$ matrix for estimation for ever. However, this results in the loss of the very essence of SEC. SEC does observation driven gain adjustment to tackle   FDI attacks; if an attack starts beyond time $t=T$, this modified security algorithm will miss the attack. Another possible strategy could be to run the SPSA sequence for $T$ time steps, and then repeat the procedure for time $t \in \{iT+1,T+2,\cdots, (i+1)T\}$ with the same step size sequence $a(1), a(2),\cdots,a(T)$, but on the observation sequence $\mathbf{y}(iT+1),\mathbf{y}(iT+2),\cdots, \mathbf{y}((i+1)T)$. This can at best ensure convergence within a neighbourhood of the solution set. However, choice of a large $T$ still results in large computational complexity.

Now we present an alternative low-complexity version of SEC called SEC-L, which recursively computes $\mathbf{\tilde{x}}_{\mathcal{B}}^+(t), \mathbf{\tilde{x}}_{\mathcal{B}^c}^+(t), \mathbf{\tilde{x}}_{\mathcal{B}}^-(t), \mathbf{\tilde{x}}_{\mathcal{B}^c}^-(t)$ for all $\mathcal{B}$ of cardinality $n_0$, which are (suboptimal) proxies for $\mathbf{\hat{x}}_{\mathcal{B}}^+(t), \mathbf{\hat{x}}_{\mathcal{B}^c}^+(t), \mathbf{\hat{x}}_{\mathcal{B}}^-(t), \mathbf{\hat{x}}_{\mathcal{B}^c}^-(t)$. 

{\bf SEC-L algorithm:} This algorithm is same as SEC, except that, at step~$5$, we compute: $\mathbf{\tilde{x}}_{\mathcal{B}}^+(t)=\mathbf{A} \mathbf{\hat{x}}(t-1)+\mathbf{K}_{t,\mathcal{B}}^+(\mathbf{y}_t-\mathbf{C}\mathbf{A}\mathbf{\hat{x}}(t-1))$. The estimates $\mathbf{\tilde{x}}_{\mathcal{B}}^-(t), \mathbf{\tilde{x}}_{\mathcal{B}^c}^+(t), \mathbf{\tilde{x}}_{\mathcal{B}^c}^-(t)$ for all subsets $\mathcal{B}$ of size $n_0$ are also calculated similarly.\qed

Clearly, SEC-L  reduces the $O(t)$ complexity for computing $\mathbf{\hat{x}}_{\mathcal{B}}^+(t), \mathbf{\hat{x}}_{\mathcal{B}^c}^+(t), \mathbf{\hat{x}}_{\mathcal{B}}^-(t)$ and  $\mathbf{\hat{x}}_{\mathcal{B}^c}^-(t)$  to $O(1)$, since SEC-L does not involve filtering over the entire observation history.

\subsection{Packet loss}\label{subsection:extension-to-lossy-links-secure-estimation}
So far we have assumed that all observations reach the fusion center without error. But, in practice, the links between a sensor and the fusion center can be unreliable, and hence some of the  observation packets sent from the sensors to the fusion center might be lost. Let us assume that the observation packet sent by sensor~$i$ at time $t$ is lost with a known probability $p_i$; packet loss is assumed to be i.i.d. across time $t$ and independent across sensors. In this case, at each time, the fusion center will receive an observation vector of variable size depending on the lost observation packets. However, since the fusion center knows the sensors from which observations are received at the current time step, the fusion center can simply restrict  the observation matrix $C$ at time $t$ to the set of observed sensors, and update only that part of $\mathbf{K}_t$ (via SPSA) which  corresponds to the observed sensors. However, since various submatrices of the gain matrix are updated at various time instants, we need to update them using asynchronous stochastic approximation (\cite[Chapter~$7$]{borkar08stochastic-approximation-book}) in the SPSA step. This requires us to maintain a counter $\nu_S(t)$ for each sensor subset $\mathcal{S} \in 2^{\mathcal{N}}$; $\nu_{\mathcal{S}}(t)$ is the number of times till $t$ when observations came only from sensor subset $\mathcal{S}$. Now, SEC can be adapted to this case in the following way. At time $t$, let the observed sensor subset be $\mathcal{S}_t$. Then, estimation is done by the following rule:
$$\mathbf{\hat{x}}(t)=\mathbf{A} \mathbf{\hat{x}}(t-1)+\mathbf{K}_{t,\mathcal{S}_t}(\mathbf{y}_t- \mathbf{C}_{\mathcal{S}_t} \mathbf{A} \mathbf{\hat{x}}(t-1))$$
and the $\mathbf{K}_t$ update equation is modified as:
$$ \mathbf{\tilde{K}}_{t+1,\mathcal{S}_t}(i,j)=\bigg[\mathbf{K}_{t,\mathcal{S}_t}(i,j)-a(\nu_{\mathcal{S}}(t))\frac{c^+(t)-c^-(t)}{2 d(t) \mathbf{\Delta}_t(i,j)}\bigg]_{-l}^l  $$
where $\mathbf{K}_{t,\mathcal{S}_t}$ is the restriction of $\mathbf{K}_t$ to the sensor subset $\mathcal{S}_t$, and $\mathbf{C}_{\mathcal{S}_t}$ is the restriction of $\mathbf{C}$ to subset $\mathcal{S}_t$. It is easy to adapt our convergence proofs for Theorem~\ref{theorem:convergence-of-SPSA} and Theorem~\ref{theorem:convergence-of-SPSA-varying-Lagrange-multiplier} to this modified algorithm.

\section{Attack detection}\label{section:attack-detection}
In this section, we  develop an efficient detection algorithm for the FDI attack  on  a {\em static} sensor subset, though the proposed detector can be heuristically used to detect  a general stationary attack. This algorithm is only meant for attack detection, with the assumption that necessary measures will be taken if an attack is detected.   The detection problem is mathematically represented as a hypothesis testing problem on the two hypotheses:
\begin{itemize}
\item $\mathcal{H}_0$: there is no attack; $\mathbf{e}_i(t)=0 \,\, \forall \,\,i \in \mathcal{N}$, $t=1,2,\cdots$
\item $\mathcal{H}_1$: there is an attack;  $\mathbf{e}_i(t) \neq 0$ for some  $i \in \mathcal{N}$
\end{itemize}
with observation sequence $\{\mathbf{y}(t)\}_{t \geq 1}$. 
Note that, due to the complicated dynamics involved in Kalman filtering, it is difficult to carry out standard hypothesis testing schemes. Also, due to the unavailability of any known safe sensor, we cannot compare the innovation sequence against any reliable quantity. However, if a subset of sensors is under attack, then the process estimate obtained only from these sensor observations is likely to have high error, and hence should be significantly different from the estimates made by other sensor observations. We exploit this fact to develop an attack detector. 

Let us denote by $\mathbf{\hat{x}}_{\mathcal{B}}(t)$ the process estimate returned by an {\em optimal} Kalman filter that uses observations made by the sensor subset $\mathcal{B}$ only (see \eqref{eqn:kalman-filter-for-single-sensor}). Let us denote the covariance matrix of the anomaly $\mathbf{e}_{\mathcal{B},\mathcal{B}^c}(t):=(\mathbf{\hat{x}}_{\mathcal{B}}(t)-\mathbf{\hat{x}}_{\mathcal{B}^c}(t))$ by $\mathbf{\underbar{P}}_{\mathcal{B},\mathcal{B}^c}$ under steady state, when there is no attack.  Clearly, if there is no attack, then, under steady state, $\mathbf{e}_{\mathcal{B},\mathcal{B}^c}(t) \sim N(\mathbf{0}, \mathbf{\underbar{P}}_{\mathcal{B}, \mathcal{B}^c})$  since the error $(\mathbf{\hat{x}}_{\mathcal{B}}(t)-\mathbf{x}(t))$ and  
$(\mathbf{\hat{x}}_{\mathcal{B}^c}(t)-\mathbf{x}(t))$ are zero-mean Gaussian. Hence, one can detect an attack by checking whether $\mathbf{e}_{\mathcal{B},\mathcal{B}^c}(t)$ is coming from the distribution $N(\mathbf{0}, \mathbf{\underbar{P}}_{\mathcal{B}, \mathcal{B}^c})$ for each subset $\mathcal{B}$ of size $n_0$. The covariance matrix $\mathbf{\underbar{P}}_{\mathcal{B},\mathcal{B}^c}$ can be pre-computed by simulating the process beforehand.

The algorithm requires a positive integer $J$ as an observation window, and a threshold $\eta>0$ for attack detection.

The algorithm to detect and localize an attack is given below. We call this algorithm DETECT.

\hrule
\noindent {\bf The DETECT algorithm}
\hrule
\noindent {\bf Input:} $J$, $\eta$.\\
\noindent {\bf Off-line pre-computation:} 
Simulate  $\{x(\tau)\}_{\tau \geq 0}$   off-line using \eqref{eqn:process-equation}. In this simulated process, compute $\{\mathbf{\hat{x}}_{\mathcal{B}}(\tau)\}_{\tau \geq 0}$ and $\{\mathbf{\hat{x}}_{\mathcal{B}^c}(\tau)\}_{\tau \geq 0}$ via optimal Kalman filtering for  each sensor subset $\mathcal{B}$ of size $n_0$, by suitable adaptation of  \eqref{eqn:kalman-filter-for-single-sensor}. Compute  $\mathbf{\underbar{P}}_{\mathcal{B}, \mathcal{B}^c}:=\lim_{T \rightarrow \infty} \frac{1}{T}\sum_{\tau=1}^{T} \mathbf{e}_{\mathcal{B},\mathcal{B}^c}(\tau)  (\mathbf{e}_{\mathcal{B},\mathcal{B}^c}(\tau) )'$. \\

\noindent {\bf Attack detection in the physical process:} 

\noindent {\bf For $t=1,2,\cdots$}

 \begin{enumerate}
\item Use the optimal Kalman filter \eqref{eqn:kalman-filter-for-single-sensor} to compute $\mathbf{\hat{x}}(t)$.
\item Compute $\{\mathbf{\hat{x}}_{\mathcal{B}}(t)\}_{t \geq 0}$ and $\{\mathbf{\hat{x}}_{\mathcal{B}^c}(t)\}_{t \geq 0}$ via optimal Kalman filtering for  each sensor subset $\mathcal{B}$ of size $n_0$, by suitable adaptation of  \eqref{eqn:kalman-filter-for-single-sensor}.  Compute $\mathbf{e}_{\mathcal{B},\mathcal{B}^c}(t)=(\mathbf{\hat{x}}_{\mathcal{B}}(t)-\mathbf{\hat{x}}_{\mathcal{B}^c}(t))$ for  each sensor subset $\mathcal{B}$ of size $n_0$.

\item Declare that an attack has happened if: 
 $$\max_{\{\mathcal{B} \in 2^{\mathcal{N}}: |\mathcal{B}|=n_0\}} \sum_{\tau=t-J+1}^t (\mathbf{e}_{\mathcal{B},\mathcal{B}^c}(\tau))' \mathbf{\underbar{P}}_{\mathcal{B}, \mathcal{B}^c}^{-1} \mathbf{e}_{\mathcal{B},\mathcal{B}^c}(\tau) >\eta$$

\item  If an attack is declared, identify the following maximizing subset as the attacked sensor subset:  
$$\arg \max_{\mathcal{B} \in 2^{\mathcal{N}}: |\mathcal{B}|=n_0} \sum_{\tau=t-J+1}^t (\mathbf{e}_{\mathcal{B},\mathcal{B}^c}(\tau))' \mathbf{\underbar{P}}_{\mathcal{B}, \mathcal{B}^c}^{-1} \mathbf{e}_{\mathcal{B},\mathcal{B}^c}(\tau)$$
\end{enumerate}
\noindent {\bf end}
\hrule

 The detection step is similar to the standard $\chi^2$ test used to check whether a sequence of random vectors are coming from a desired Gaussian distribution, except that this test is conducted on all possible subsets of size $n_0$, and hence the $\max$ operation is needed.
 The false alarm probability can be controlled via selection of the threshold $\eta$.

Now we provide a learning scheme to find the optimal $\eta$ for a given target on the false alarm probability. 
The false alarm probability $P_F$ under DETECT is defined as:

\footnotesize
\begin{equation*}
P_F 
=\lim_{t \rightarrow \infty}  \mathbb{P}\bigg( \max_{\mathcal{B} \in 2^{\mathcal{N}}: |\mathcal{B}|=n_0} \sum_{\tau=t-J}^t (\mathbf{e}_{\mathcal{B},\mathcal{B}^c}(\tau))' 
 \mathbf{\underbar{P}}_{\mathcal{B}, \mathcal{B}^c}^{-1} \mathbf{e}_{\mathcal{B},\mathcal{B}^c}(\tau) >\eta \bigg | \mathcal{H}_0 \bigg)
\end{equation*}
\normalsize

In order to satisfy the constraint $P_F \leq \alpha$ with equality, we need to choose an optimal threshold $\eta_{\alpha}^*$ in DETECT. The optimal $\eta_{\alpha}^*$ can be computed by using the following LEARN algorithm (a stochastic approximation step) in the off-line pre-computation phase of DETECT.

The LEARN algorithm requires a positive sequence $\{a(\tau)\}_{\tau \geq 0}$ such that $\sum_{\tau =0}^{\infty} a(\tau)=\infty$ and $\sum_{\tau=0}^{\infty} a^2(\tau)<\infty$. LEARN also requires $\mathbf{\underbar{P}}_{\mathcal{B},\mathcal{B}^c}$ for all subset $\mathcal{B}$ of size $n_0$ as input. The algorithm simulates the $\mathbf{x}(\tau)$ process  off-line, and maintains a   detector as in DETECT with an initial threshold $\eta(0)$. Let us denote the number of false alarm triggers made by this detector up to time $(\tau-1)$ in the simulated process by $N_{\tau-1}$.

\hrule
\hrule 
\noindent {\bf The LEARN algorithm}
\hrule
\hrule
\noindent {\bf Input:} $J$, $\{a(\tau)\}_{\tau \geq 0}$, $\mathbf{\underbar{P}}_{\mathcal{B},\mathcal{B}^c}$ for all subsets $\mathcal{B}$ of size $n_0$.
{\bf Initialization:} $\eta(0)$, $N_0=0$

\noindent {\bf For $\tau=0,1,2,\cdots$ in the simulated process:}
\begin{enumerate}
\item  Check if 
$\max_{\mathcal{B} \in 2^{\mathcal{N}}: |\mathcal{B}|=n_0} \sum_{n=\tau-J+1}^{\tau} (\mathbf{e}_{\mathcal{B},\mathcal{B}^c}(n))' \mathbf{\underbar{P}}_{\mathcal{B}, \mathcal{B}^c}^{-1} \mathbf{e}_{\mathcal{B},\mathcal{B}^c}(n) >\eta(\tau)$.
\item If this condition is satisfied, update $N_{\tau}=N_{\tau-1}+1$, else $N_{\tau}=N_{\tau-1}$.
\item Update the threshold  $\eta(\tau+1)=[\eta(\tau)+a(\tau) (\mathbb{I}(N_{\tau}>N_{\tau-1})-\alpha)]_0^l$.
\end{enumerate}
\noindent {\bf end}

\hrule
\hrule

The $\eta(\tau)$ update scheme is a stochastic approximation algorithm (see \cite{borkar08stochastic-approximation-book}).
The goal of the $\eta(\tau)$ update scheme is to meet the false alarm probability constraint with equality. If a false alarm is triggered  at time $\tau$ in the simulation, $\eta(\tau)$ is increased; else, $\eta(\tau)$ is decreased. By the theory of \cite{borkar08stochastic-approximation-book}, it is straightforward to show that $\lim_{\tau \rightarrow \infty} \eta(\tau)=\eta_{\alpha}^*$ almost surely, and $\lim_{\tau \rightarrow \infty} \mathbb{P} (n_{\tau+1}>n_{\tau})=\alpha$.
 $l$ is a large positive number such that  $\eta_{\alpha}^* \in (0,l)$. The projection operation is used to ensure boundedness of the $\eta(\tau)$ iterates.


\section{Numerical results}\label{section:numerical-work}
In this section, we numerically demonstrate the efficacy of SEC-L and DETECT. For attack detection, we compare the performance of DETECT with the traditional $\chi^2$ detector, and also with the detector of \cite{li2017detection}.  The algorithm of \cite{li2017detection} assumes the availability of a set of {\em safe sensors} (SAFE). In SAFE, at each time $t$, observations are collected from all sensors, but the safe sensor observations are used by  the Kalman filter to generate an initial estimate. Then, the observations from potentially unsafe sensors are passed through a $\chi^2$ detector, and those observations are used in a Kalman filter to obtain   $\mathbf{\hat{x}}(t)$ if and only if the $\chi^2$ detector is not triggered.  
 
 For secure estimation, we also compare the performance of SEC-L with a blind Kalman filter oblivious to cyber-attack (KALMAN), and a Kalman filter which perfectly knows (genie-aided) the malicious sensors and   ignores their observations (we call this estimator GENIE). Note that, we do not investigate the performance for the original SEC algorithm in order to avoid the huge computational burden that grows with time, but we recall that SEC-L is motivated by SEC.

In each case, we consider an independent realization of a system with the following parameters.  The state transition matrix $\mathbf{A}$ is taken as a randomly generated $q \times q$ stochastic matrix multiplied by $0.5$. State noise covariance matrix $\mathbf{Q} \in \mathbb{R}^{q \times q}$ is chosen to be a positive semidefinite (PSD) matrix such that $\mathbf{Q}^{\frac{1}{2}}=0.1 \mathbf{Z}$ where $\mathbf{Z}_{i,j} \sim u[-1,1]$. The  matrix $\mathbf{R}$ is also chosen similarly. Observation matrix $\mathbf{C} \in \mathbb{R}^{kN \times q}$ is chosen randomly, with  $\mathbf{C}_{i,j} \sim u [0,1]$;   the observation made by each sensor is a $k$-dimensional column vector. We also assume that at most $n_0$ sensors can be  attacked at a time. The attacker inverts the sign of the innovation vectors  coming from the malicious sensors.

We consider two situations: (i) the attacker knows the estimate made by the remote estimator, and (ii) the attacker can only run a Kalman filter in order to guess the estimate.  Let us denote a matrix $\mathbf{C}_a(t)$ which is same as $\mathbf{C}$, except that the entries corresponding to the benign (not malicious) sensors at time $t$  are set to $0$. Similarly, let $\mathbf{y}_a(t)$ be same as $\mathbf{y}(t)$, except that the entries corresponding to the benign   sensors  are  $0$.  

When the attacker knows the estimate made by the remote estimator, the received observation at time $t$ at the remote estimator becomes $\mathbf{\tilde{y}}(t)=\mathbf{y}(t)+2\mathbf{C}_a(t) \mathbf{A}  \mathbf{\hat{x}}(t-1)-2\mathbf{y}_a(t)$; this is equivalent to inverting the sign of the innovation vector. We call the corresponding variants of SEC-L, KALMAN and GENIE by SEC-L-K, KALMAN-K and GENIE-K (with {\em knowledge} of the estimate).

However, if $\mathbf{\hat{x}} (t-1)$ is not known to the attacker, then the attacker can run a Kalman filter on the received observations at the estimator, in order to maintain a proxy $\mathbf{\hat{x}}_{kalman}(t-1)$ for $\mathbf{\hat{x}}(t-1)$. In this case, the the received observation at time $t$ at the remote estimator is  $\mathbf{\tilde{y}}(t)=\mathbf{y}(t)+2\mathbf{C}_a(t)\mathbf{A} \mathbf{\hat{x}}_{kalman}(t-1)-2\mathbf{y}_a(t)$. We call the corresponding variants of SEC-L, KALMAN and GENIE by SEC-L-NK, KALMAN-NK and GENIE-NK ({\em no knowledge} of estimate).

 \begin{figure}[t!]
\begin{centering}
\begin{center}
\includegraphics[height=6cm, width=8cm]{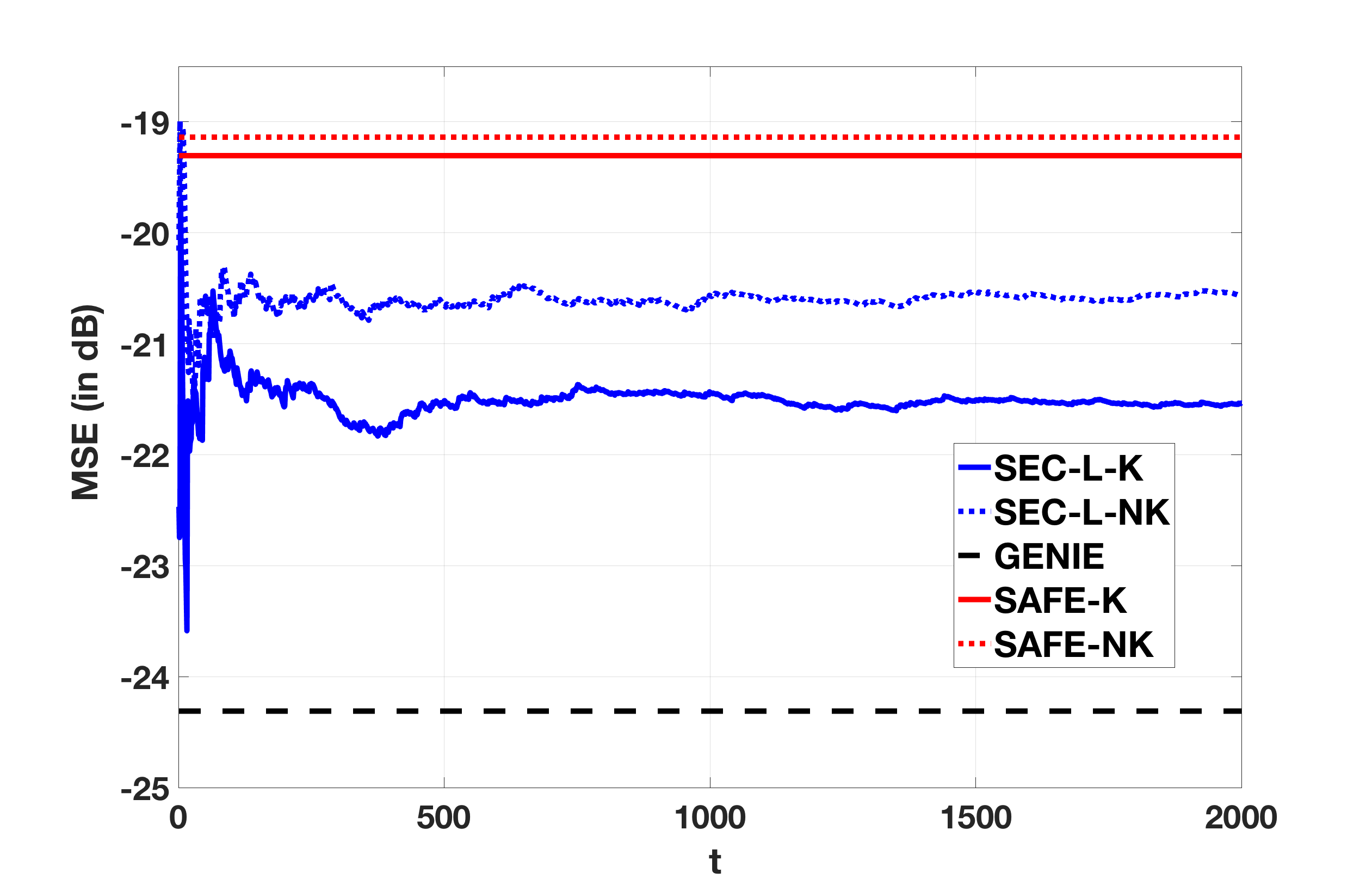}
\end{center}
\end{centering}
\caption{Performance   of SEC-L   under static attack.  $N=5$, $n_0=2$, $k=2$, $q=2$, $\lambda=2$, $a(t)=\frac{1}{2t}$, $d(t)=\frac{0.1}{t^{0.1}}$.}
\label{fig:secure-estimation-static-attack}
\end{figure}

 \begin{figure}[t!]
\begin{centering}
\begin{center}
\includegraphics[height=6cm, width=8cm]{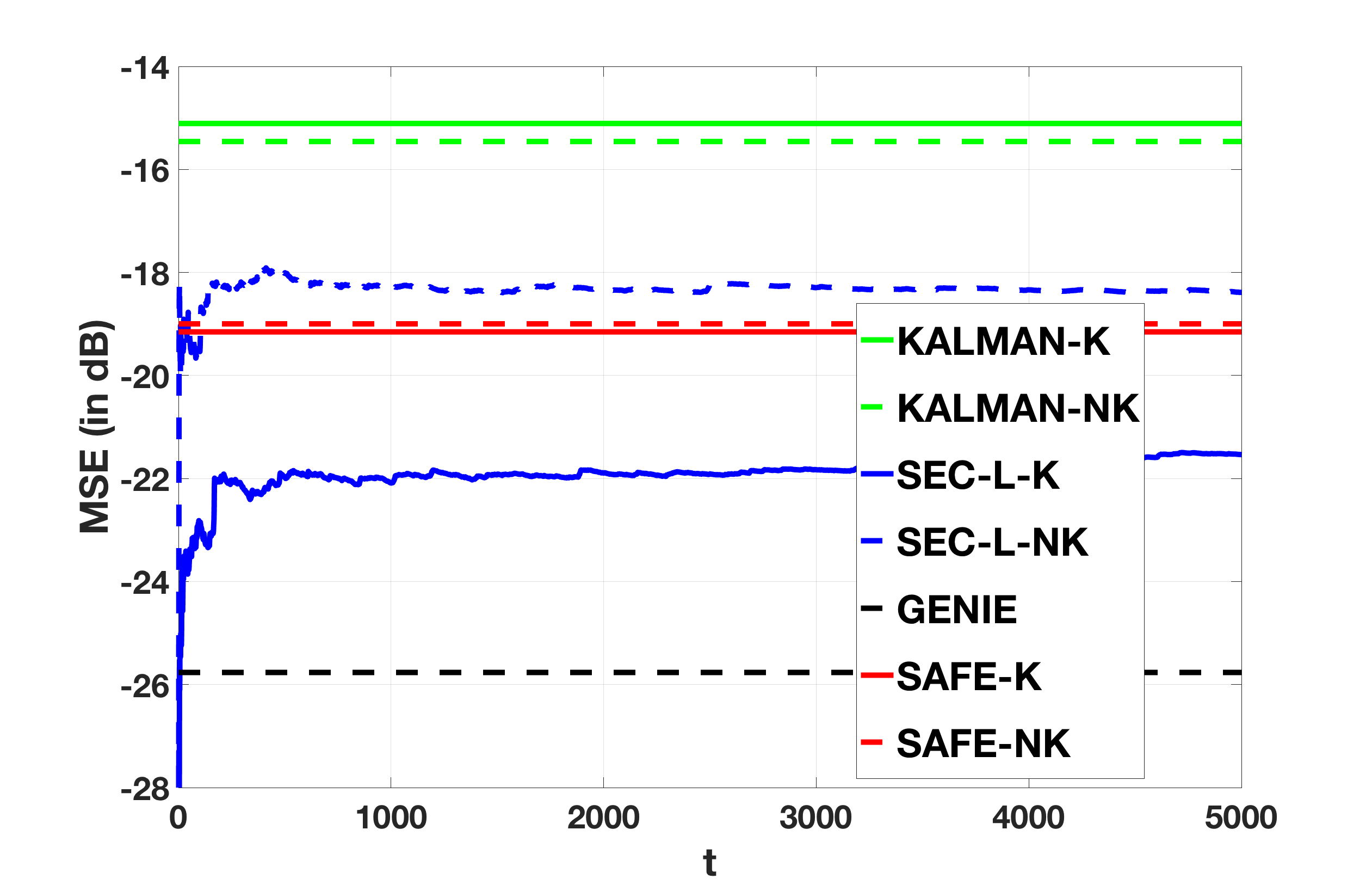}
\end{center}
\end{centering}
\caption{Performance   of SEC-L  under switching location attack.  $N=5$, $n_0=2$, $k=2$, $q=2$, $\lambda=2$, $a(t)=\frac{1}{2t}$, $d(t)=\frac{0.1}{t^{0.1}}$, $T=20$.}
\label{fig:secure-estimation-switching-attack}
\vspace{-0cm}
\end{figure}

\subsection{Secure estimation}\label{subsection:numerical-secure-estimation}
Here we   compare the time-averages MSE of SEC-L for  a fixed $\lambda=2$ with    GENIE, KALMAN and SAFE. 
\subsubsection{Secure estimation under static attack}
We consider a static attack, i.e., the attacked sensor subset $\mathcal{A}(t)$ does not vary with time. We ran the simulation for a number of independent problem instances, and computed time-averaged MSE of the various algorithms along a single sample path. We observe from Figure~\ref{fig:secure-estimation-static-attack}, that the time-average MSE of SEC-L converges within $3.5$~dB of that of  GENIE,  and is smaller than SAFE by a margin of $2$ to $3$~dB, for both K and NK situations.  However, we have observed in some other simulation instances that SEC-L can yield a larger MSE than SAFE, but without using the knowledge of any safe sensor.   SEC-L has an MSE much smaller that KALMAN, hence we do not show the performance of KALMAN. 
 
 We would like to mention here that, GENIE-K and GENIE-NK have the same performance since both ignore the observations from attacked sensors. Also, we can not order the MSE of SEC-L-K and SEC-L-NK, or SAFE-K and SAFE-NK, since SEC-L and SAFE are not provably optimal algorithms to minimize MSE under such attacks.

\subsubsection{Secure estimation under switching location attack}
We next consider an attack model where, at time instances $t=1,T+1,2T+1,\cdots$ (with $T=20$), a random sensor subset of size $n_0$ is chosen in an i.i.d. fashion, and this subset is attacked over the next $T$ slots by inverting its innovation sequence. We assume that the probability of attacking a sensor~$i$ is proportional to $\frac{1}{i^2}$. Since each sensor is susceptible to an FDI attack, SAFE  is not applicable here due to the lack of availability of any {\em safe} sensor; but still we compare SAFE-K and SAFE-NK with SEC-L-K, SEC-L-NK (for $\lambda=2$), KALMAN-K, KALMAN-NK and  GENIE.  We   observe from Figure~\ref{fig:secure-estimation-switching-attack} that (i) SEC-L-K has $3$~dB lower MSE  than SAFE-K, and SEC-L-NK has very similar performance to SAFE-NK, (ii) KALMAN offers the worse error performance among all algorithms, and (iii)  SEC-L yields $4$ to $7$~dB worse MSE than  GENIE. Since     GENIE always knows $\mathcal{A}(t)$ perfectly, GENIE enjoys significant advangate over SEC-L. 

As $T$ increases, the attack will become   more static in nature, and hence SEC-L will be able to adjust $\mathbf{K}_t$ more efficiently.

 \begin{figure}[t!]
\begin{centering}
\begin{center}
\includegraphics[height=3.5cm, width=6cm]{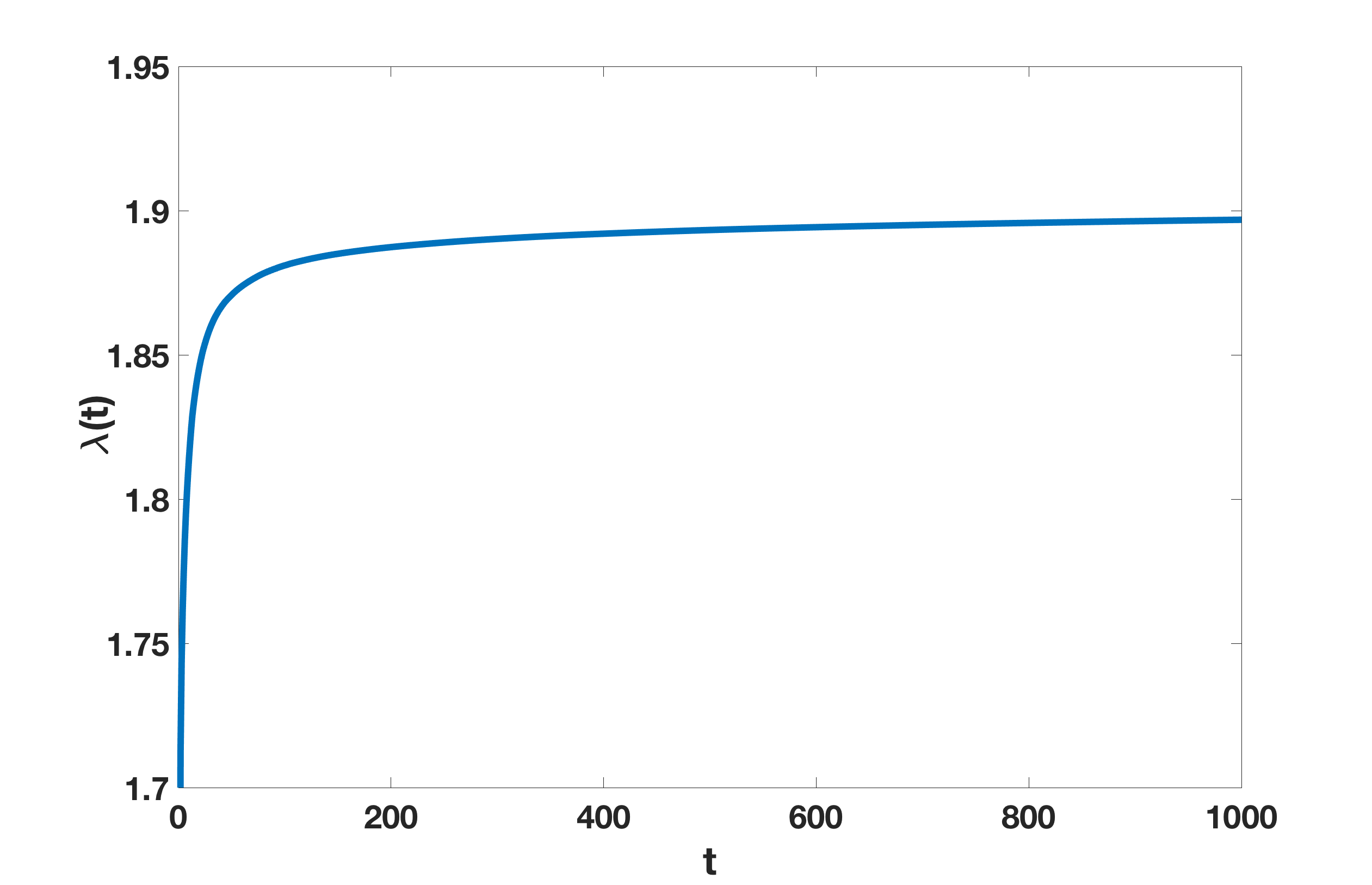}
\end{center}
\end{centering}
\caption{Convergence of $\lambda(t)$   in SEC-L-NK under static attack for $\xi=3$.}
\label{fig:convergence-of-lambda-iteration}
\vspace{-0cm}
\end{figure}

\begin{table}[t!]
\centering
\begin{tabular}{|c |c |c|c|c|c|c|c|c|}
\hline
$\xi$ & 2  & 3  & 4  &  5  &  7  &  10    & 20     \\ \hline
MSE  &  0.0061    &  0.0091    &  0.0076   &  0.0059    &  0.0064    &   0.0066      &   0.0076       \\ \hline
\end{tabular}
\caption{Variation of $MSE(\xi)$ and $\lambda^*(\xi)$ against $\xi$ under static attack.}
\label{table:performance-MSE-and-lambda-against-xi}
\end{table}

\subsubsection{SEC-L for the constrained problem}
Here we consider static attack as before, under the NK situation, but for a different problem instance. We   compute the MSE $\bar{P}_{kalman}$ under a Kalman filter \eqref{eqn:kalman-filter-for-single-sensor} when there is no attack,  set the constraint $\bar{P}=\xi \bar{P}_{kalman}$ (for some factor $\xi > 1$) in \eqref{eqn:constrained-problem-for-estimate-correction}, and then simulate the performance of SEC-L (with the chosen $\bar{P}$).

Figure~\ref{fig:convergence-of-lambda-iteration} illustrates the convergence of $\lambda(t)$ under $\xi=3$. The effect of $\xi$ on the time-average MSE (averaged over $500000$~iterations) of SEC-L-NK (with the two-timescale iteration)   is depicted in Table~\ref{table:performance-MSE-and-lambda-against-xi}. As $\xi$ increases from $2$,  we find that the MSE first increases, then decreases and finally increases again. Hence, picking the optimal value of $\xi$ is crucial in order to minimize MSE under SEC in presence of FDI attack. There is no fixed guideline on how to pick the value of $\xi$ (and consequently $\bar{P}$); this has to be done based on prior knowledge of the system as well as possible attackers. One possible way to choose $\xi$ is to consider the worst possible  linear attack, simulate (off-line) the MSE performance of SEC for various values of $\xi$ in presence of such attacks, find the value of $\xi$ that minimizes the MSE, and then use this value of $\xi$ to SEC used in the real system.


\subsection{Detection under static attack}\label{subsection:detection-performance}
Here we consider that sensors $\{1,2,\cdots,n_0\}$ are under attack (which is not known to the fusion center), and compare the receiver operating characteristic (ROC) curves for DETECT, SAFE and $\chi^2$ detectors.  We run each algorithm for a large number of time slots; the fraction of time slots when the detector is triggered in presence of attack is defined as the detection probability $P_d$, and the fraction of time slots when the detector is triggered in absence of attack is defined as the false alarm probability $P_F$.   In this particular numerical example, we have fixed $J=10$ as the observation window for all detectors. For a range of constraints on the false alarm probability, in each algorithm, we optimize $\eta$   via the LEARN algorithm to meet the false alarm constraint with equality, and use these $\eta$ values to estimate the respective detection probabilities via simulation. It is obvious that the false alarm probability and detection probability both decrease with $\eta$. 

In Figure~\ref{fig:detection-static-attack}, we compare the ROC plots for K and NK cases defined in Section~\ref{subsection:numerical-secure-estimation}. We observe that  DETECT exhibits a much better    detection performance than SAFE and the $\chi^2$ detector (up to $75 \%$ improvement in $P_d$ over SAFE). Also, $\chi^2$ detector outperforms SAFE since SAFE is not an optimal   detection algorithm even with the knowledge of safe sensors.

Note that, DETECT requires us to pre-compute  $\mathbf{\underbar{P}}_{\mathcal{B},\mathcal{B}^c}$ for ${N \choose n_0}$ possible subset pairs; hence, we gain the detection performance improve w.r.t. SAFE (which uses more side information)  at the expense of more computation.

\begin{figure}[t!]
\begin{centering}
\begin{center}
\includegraphics[height=5cm, width=7cm]{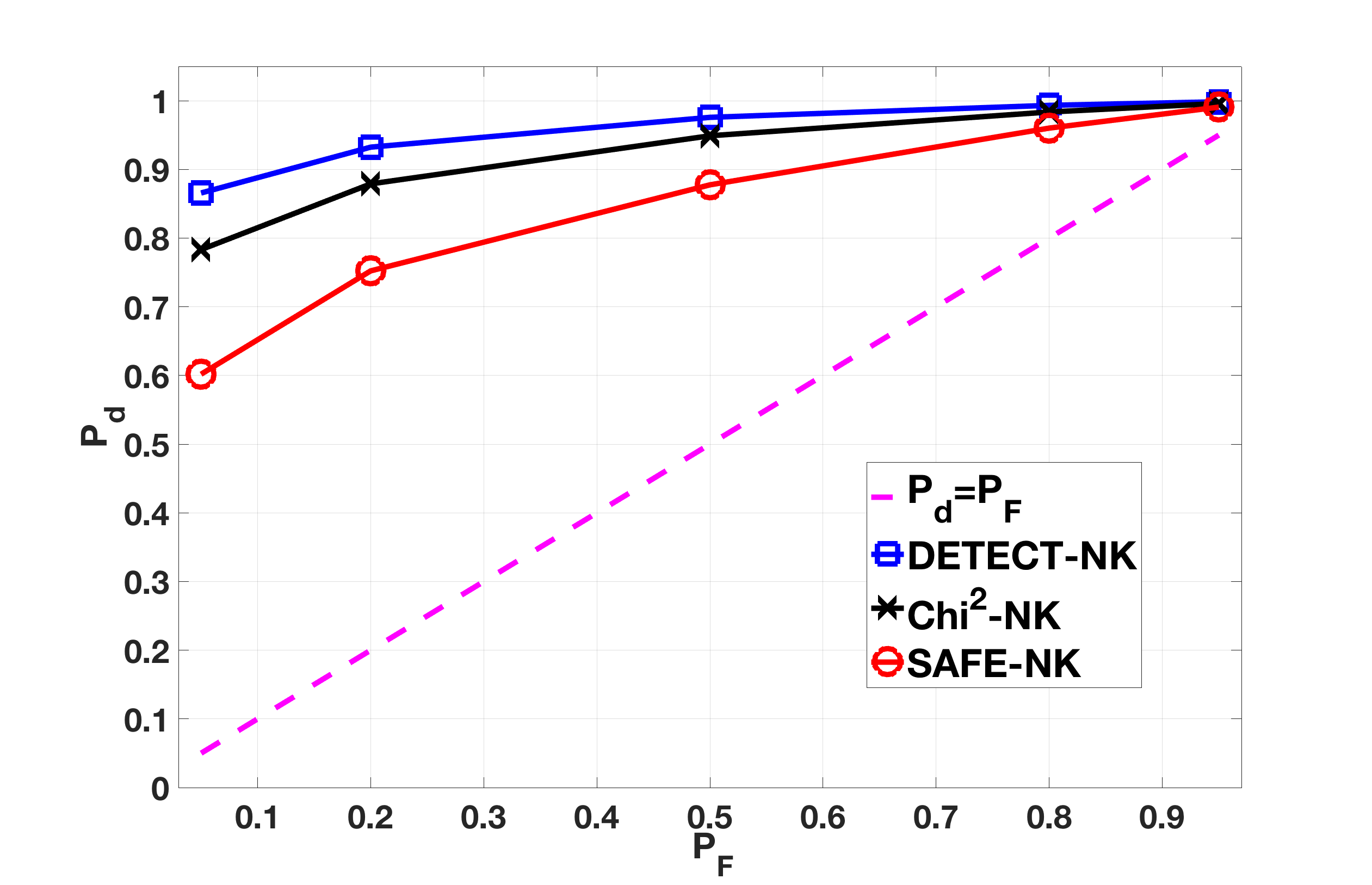}
\includegraphics[height=5cm, width=7cm]{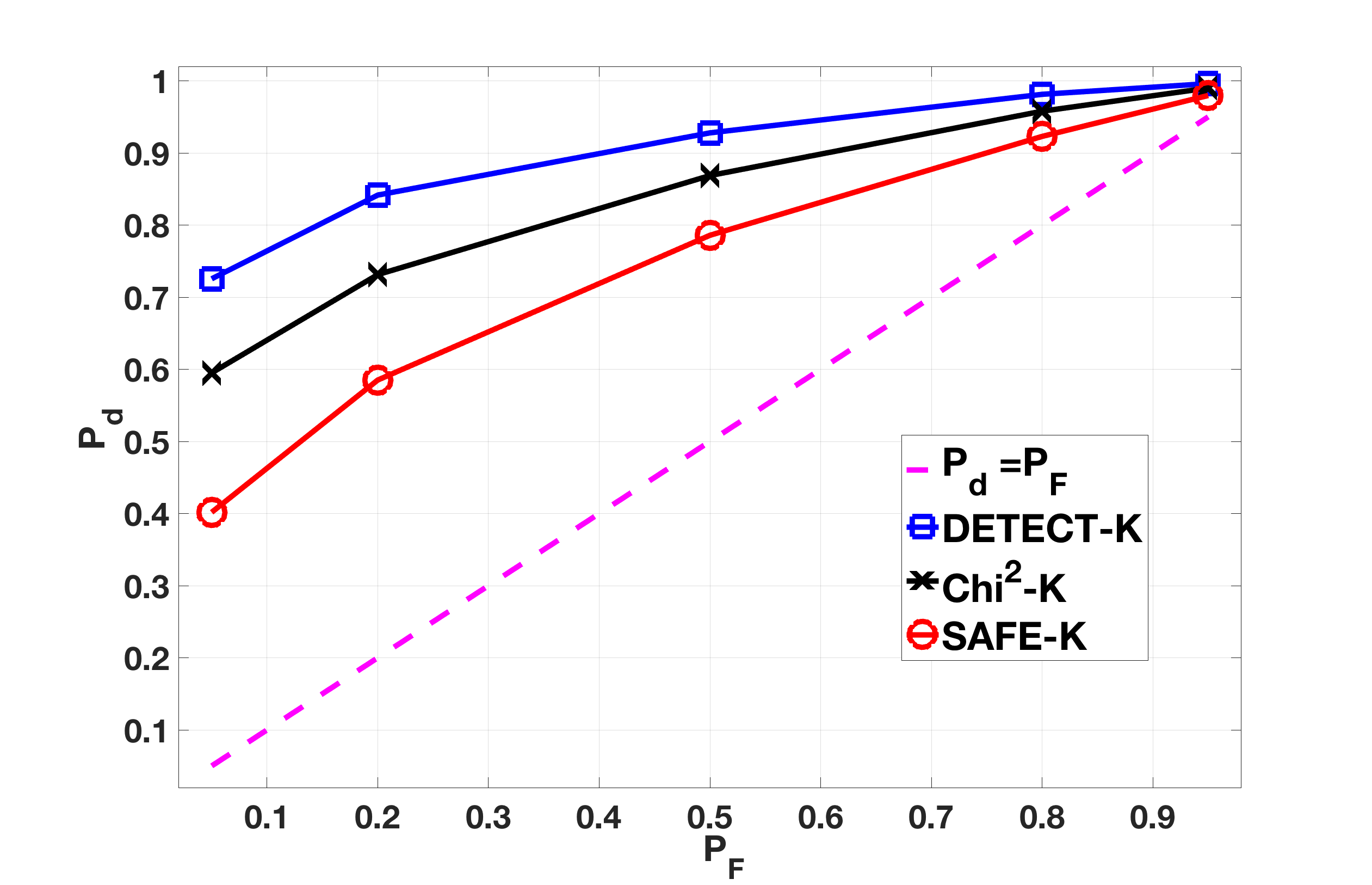}
\end{center}
\end{centering}
\caption{Performance comparison of DETECT against $\chi^2$ and SAFE detectors, under static attack.  $N=5$, $n_0=2$, $k=2$, $q=2$. Two safe sensors are known to the SAFE algorithm. $\eta$ is optimized in all algorithms to meet the false alarm constraint with equality. Top plot: attacker does not know the estimate. Bottom plot: at each time, the attacker knows the previous estimate made at the fusion center.}
\label{fig:detection-static-attack}
\end{figure}

\section{Conclusions}\label{section:conclusion}
In this paper, we    developed a secure estimation algorithm to reduce MSE under FDI attack on a static or time-varying unknown sensor subset, and proved the convergence of the algorithm. Next, we      proposed a detection algorithm for FDI attack on an unknown sensor subset, and also developed another algorithm to optimize the detector. When compared against competing  algorithms, our proposed algorithms demonstrate comparable or even $3$~dB lower MSE and $75\%$ higher attack detection probability subject to a given false alarm constraint,    while using less side information. In future, we would endeavour to expand our work to unknown system dynamics.

\appendices

\section{Proof of Theorem~\ref{theorem:convergence-of-SPSA}}
\label{appendix:proof-of-convergence-of-SPSA}
We consider $b(t)=0$ for all $t=0,1,2,\cdots$, i.e., a constant value of $\lambda$ at each time. 
The proof steps are described in next few subsections. First, we identify the timescales at which various quantities are updated. Next, we prove convergence in the faster timescale and in the slower timescale iterate $\mathbf{K}_t$.
\subsection{Timescale identification}\label{subsection:timescale-identification} Let us recall the evolution of $\mathbf{P}_t$ in \eqref{eqn:evolution-of-error-covariance-for-general-gain-matrix-sequence}, the evolution of $\max_{\mathcal{B} \in 2^{\mathcal{N}}:|\mathcal{B}|=n_0} ||\mathbf{\hat{x}}_{\mathcal{B}}(t)-\mathbf{\hat{x}}_{\mathcal{B}^c}(t)||^2$, and the $\mathbf{K}_t$ update scheme in \eqref{eqn:SPSA-update-equation}. These evolutions  constitute a multi-timescale iteration (see \cite[Chapter~$6$]{borkar08stochastic-approximation-book}).  

Note that, the  $\mathbf{P}_t$ iteration uses a matrix $(\mathbf{I}-\mathbf{K}_t \mathbf{C})$ instead of any diminishing step size sequence, and the spectral radius of $(\mathbf{I}-\mathbf{K}_t \mathbf{C})$ is less than or equal to $(1-\delta)$ for all $t \geq 0$. Hence, the $\mathbf{P}_t$ iteration runs at a faster timescale, and the $\mathbf{K}_t$ iteration (which is a stochastic gradient descent algorithm) runs at a slower timescale since the $\mathbf{K}_t$ iteration uses a diminishing step size sequance $\{a(t)\}_{t \geq 0}$. As a result, the $\mathbf{P}_t$ iteration will view the $\mathbf{K}_t$ iterates as quasi-static, whereas the $\mathbf{K}_t$ iteration will view the $\mathbf{P}_t$ iteration as almost equilibriated. 

 $\mathbf{\hat{x}}_{\mathcal{B}}(t)$ and $\mathbf{\hat{x}}_{\mathcal{B}^c}(t)$  run at a faster timescale compared to the $\mathbf{K}_t$ iteration. Hence,  $\max_{\mathcal{B} \in 2^{\mathcal{N}}:|\mathcal{B}|=n_0} ||\mathbf{\hat{x}}_{\mathcal{B}}(t)-\mathbf{\hat{x}}_{\mathcal{B}^c}(t)||^2$ also runs at a faster timescale.  On the other hand, the $\mathbf{K}_t$ iteration \eqref{eqn:SPSA-update-equation} runs at the slowest timescale.

\subsection{Convergence in the   faster timescale iterates}\label{subsection:convergence-in-faster-timescale} Let us assume that $\mathbf{K}_t=\mathbf{K}$ for all $t$, and that $|\lambda_{max}(\mathbf{I}-\mathbf{K} \mathbf{C})| \leq (1-\delta)<1$. In this case, the $\mathbf{P}_t$ iteration (starting from $\mathbf{P}_0$) becomes:
$$\mathbf{P}_t=(\mathbf{I}-\mathbf{K} \mathbf{C})(\mathbf{A} \mathbf{P}_{t-1}\mathbf{A}'+\mathbf{Q})(\mathbf{I}-\mathbf{K} \mathbf{C})+\mathbf{K}\mathbf{R}\mathbf{K}'$$
These $\mathbf{P}_t$ iterates are bounded, and converge to a positive semidefinite matrix $P(\mathbf{K})$ which is basically the unique fixed point of the above iteration (by Banach's fixed point theorem \cite{rudin76principles-of-mathematical-analysis}, since $|\lambda_{max}(\mathbf{I}-\mathbf{K} \mathbf{C})| \leq 1-\delta$).

Now, it is easy to show that $\{\mbox{Tr}(\mathbf{P}_t)\}_{t \geq 0}$ is a sequence of   continuous, convex  functions of $\mathbf{K}$,  and $\mathbf{K}$ lies in a compact set. Hence,  the limiting function $P(\mathbf{K})$ is     continuous in $\mathbf{K}$.

Under a constant gain matrix $\mathbf{K}$ and a stationary attack, the Markov chains  $\{ \mathbf{\hat{x}}_{\mathcal{B}}(t) \}_{t \geq 0}$ and $\{ \mathbf{\hat{x}}_{\mathcal{B}^c}(t) \}_{t \geq 0}$ converge to their respective steady state distributions exponentially fast in time $t$. Hence,    $\mathbb{E} [ \max_{\mathcal{B} \in 2^{\mathcal{N}}:|\mathcal{B}|=n_0} ||\mathbf{\hat{x}}_{\mathcal{B}}(t)-\mathbf{\hat{x}}_{\mathcal{B}^c}(t)||^2 ] $ converges exponentially fast to  $g(\mathbf{K})$ which is   continuous in $\mathbf{K}$. 

Thus, the expected single stage cost $C(\mathbf{K})= g(\mathbf{K})+\lambda P(\mathbf{K})$   is  continuous in $\mathbf{K}$.

Hence, by an argument similar to the proof of \cite[Chapter~$6$, Lemma~$1$]{borkar08stochastic-approximation-book},  we can write the following relations for  SEC:

\footnotesize
\begin{eqnarray}\label{eqn:convergence-in-faster-timescale}
&& \lim_{t \rightarrow \infty} || \mathbf{P}_t-P(\mathbf{K}_t)||=0 \nonumber\\
&&  \lim_{t \rightarrow \infty} ||\mathbb{E} [ \max_{\mathcal{B} \in 2^{\mathcal{N}}:|\mathcal{B}|=n_0} ||\mathbf{\hat{x}}_{\mathcal{B}}(t)-\mathbf{\hat{x}}_{\mathcal{B}^c}(t)||^2-g(\mathbf{K}_t) ] ||=0 
\end{eqnarray}
\normalsize

\subsection{Convergence of the $\mathbf{K}_t$ iterates}\label{subsection:convergence-in-slower-timescale}
The $\mathbf{K}_t(i,j)$ update   \eqref{eqn:SPSA-update-equation} can be rewritten as: 

\footnotesize
\begin{eqnarray*}
 && \mathbf{\tilde{K}}_{t+1}(i,j) \\
&=& \bigg[\mathbf{K}_t(i,j)-a(t)\frac{c^+(t)-c^-(t)}{2 d(t) \mathbf{\Delta}_t(i,j)}\bigg]_{-l}^l \\
&=& \bigg[\mathbf{K}_t(i,j)-a(t)\times  \frac{\mathbb{E} c^+(t)- \mathbb{E} c^-(t)+M^+(t)-M^-(t)}{2 d(t) \mathbf{\Delta}_t(i,j)}\bigg]_{-l}^l \\
\end{eqnarray*}
\normalsize

\noindent where $M^+(t):=c^+(t)-\mathbb{E} [c^+(t)]$ and $M^-(t):=c^-(t)-\mathbb{E} [c^-(t)]$ are two zero mean Martingale difference sequences. 

Note that, $c^+(t)= \max_{\mathcal{B} \in 2^{\mathcal{N}}:|\mathcal{B}|=n_0}||\mathbf{\hat{x}}_{\mathcal{B}}^+(t)-\mathbf{\hat{x}}_{\mathcal{B}^c}^+(t)||^2 +\lambda \mbox{Tr}(\mathbf{P}_t^+)$. We seek to show that $\mathbb{E}[c^+(t)-c^-(t)]=C(\mathbf{K}_t^+)-C(\mathbf{K}_t^-)+o(d(t))$. 

\begin{lemma} \label{lemma:slower-timescale-first-lemma}
$\mathbb{E} [ \mbox{Tr}(\mathbf{P}_t^+-\mathbf{P}_t^- )]=\mathbb{E} [ \mbox{Tr}(P(\mathbf{K}_t^+)-P(\mathbf{K}_t^-))]+o(d(t))$
\end{lemma}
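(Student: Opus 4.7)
The plan is to use the fixed-point characterisation of $P(\mathbf{K})$ to reduce the one-step residual $\mathbf{P}_t^{\pm}-P(\mathbf{K}_t^{\pm})$ to a contraction applied to $\mathbf{P}_{t-1}-P(\mathbf{K}_t^{\pm})$, then expand everything to first order around $\mathbf{K}_t$ and exploit the mean-zero Rademacher symmetry of $\mathbf{\Delta}_t$ to kill all leading $O(d(t))$ terms in conditional expectation.

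First I would introduce the linear Lyapunov operator $L_{\mathbf{K}}(\mathbf{P}):=(\mathbf{I}-\mathbf{K}\mathbf{C})\mathbf{A}\mathbf{P}\mathbf{A}'(\mathbf{I}-\mathbf{K}\mathbf{C})'$ and the affine term $b(\mathbf{K}):=(\mathbf{I}-\mathbf{K}\mathbf{C})\mathbf{Q}(\mathbf{I}-\mathbf{K}\mathbf{C})'+\mathbf{K}\mathbf{R}\mathbf{K}'$, so that step~$6$ of SEC reads $\mathbf{P}_t^{\pm}=L_{\mathbf{K}_t^{\pm}}(\mathbf{P}_{t-1})+b(\mathbf{K}_t^{\pm})$, while the constant-gain steady-state $P(\mathbf{K}_t^{\pm})$ satisfies $P(\mathbf{K}_t^{\pm})=L_{\mathbf{K}_t^{\pm}}(P(\mathbf{K}_t^{\pm}))+b(\mathbf{K}_t^{\pm})$. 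Subtraction yields the identity $\mathbf{P}_t^{\pm}-P(\mathbf{K}_t^{\pm})=L_{\mathbf{K}_t^{\pm}}(\mathbf{P}_{t-1}-P(\mathbf{K}_t^{\pm}))$; adding and subtracting $L_{\mathbf{K}_t^{\pm}}(P(\mathbf{K}_t))$ then splits the quantity $(\mathbf{P}_t^+-P(\mathbf{K}_t^+))-(\mathbf{P}_t^--P(\mathbf{K}_t^-))$ into a cross term $[L_{\mathbf{K}_t^+}-L_{\mathbf{K}_t^-}](\mathbf{P}_{t-1}-P(\mathbf{K}_t))$ together with the pair $L_{\mathbf{K}_t^+}(P(\mathbf{K}_t)-P(\mathbf{K}_t^+))-L_{\mathbf{K}_t^-}(P(\mathbf{K}_t)-P(\mathbf{K}_t^-))$.

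Next I would Taylor-expand each piece around $\mathbf{K}_t$. Since $L_{\mathbf{K}}$ is at most quadratic in the entries of $\mathbf{K}$ (so $L_{\mathbf{K}_t^+}-L_{\mathbf{K}_t^-}=2d(t)\,DL(\mathbf{K}_t)[\mathbf{\Delta}_t]$ exactly), and since $P(\mathbf{K})$ is $C^{\infty}$ on the interior of $\mathcal{K}$ because it is the unique solution of the smoothly parametrised Lyapunov equation $(\mathbf{I}-L_{\mathbf{K}})P(\mathbf{K})=b(\mathbf{K})$ with $\mathbf{I}-L_{\mathbf{K}}$ invertible under $|\lambda_{\max}(\mathbf{I}-\mathbf{K}\mathbf{C})|\le 1-\delta$, I obtain $P(\mathbf{K}_t^{\pm})-P(\mathbf{K}_t)=\pm d(t)\nabla P(\mathbf{K}_t)[\mathbf{\Delta}_t]+O(d(t)^2)$. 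Substituting these expansions and taking the trace, every surviving contribution falls into one of two types: (i) terms linear in $\mathbf{\Delta}_t$ with a coefficient that is $\mathcal{F}_{t-1}$-measurable, where $\mathcal{F}_{t-1}$ denotes the $\sigma$-algebra generated by the history up to time $t-1$; or (ii) terms of order $d(t)^2$ arising as products of two $O(d(t))$ factors. Conditioning on $\mathcal{F}_{t-1}$ makes $\mathbf{K}_t$ and $\mathbf{P}_{t-1}$ measurable while leaving $\mathbf{\Delta}_t$ an independent mean-zero Rademacher matrix, so every type-(i) contribution has zero conditional expectation, while the type-(ii) contributions are bounded by $C\,d(t)^2=o(d(t))$ uniformly in $t$ (using that $\|\mathbf{P}_{t-1}-P(\mathbf{K}_t)\|$ is $o(1)$ almost surely by the faster-timescale result \eqref{eqn:convergence-in-faster-timescale}). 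Hence $\mathbb{E}[\mbox{Tr}(\mathbf{P}_t^+-\mathbf{P}_t^-)-\mbox{Tr}(P(\mathbf{K}_t^+)-P(\mathbf{K}_t^-))\mid\mathcal{F}_{t-1}]=o(d(t))$, and the tower property together with the uniform boundedness of the iterates guaranteed by the projection to $[-l,l]$ promotes this to the unconditional statement.

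The main obstacle is the cross term $[L_{\mathbf{K}_t^+}-L_{\mathbf{K}_t^-}](\mathbf{P}_{t-1}-P(\mathbf{K}_t))$: in norm this is only $O(d(t))\cdot o(1)$, which is not better than $O(d(t))$ sample-path-wise. The improvement to $o(d(t))$ comes entirely from the fact that the tracking error $\mathbf{P}_{t-1}-P(\mathbf{K}_t)$ is $\mathcal{F}_{t-1}$-measurable whereas the $O(d(t))$ operator difference is linear in the fresh Rademacher $\mathbf{\Delta}_t$, so their orthogonality collapses the conditional expectation. Making this precise requires a careful separation of timescales and uniform integrability to justify passing from conditional to unconditional expectation, both of which should follow from the projection step in SEC and the uniform spectral contraction of $L_{\mathbf{K}_t}$ guaranteed by the set $\mathcal{K}$.
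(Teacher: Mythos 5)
Your proposal is correct, and it shares the paper's skeleton: both proofs start from the one-step recursion and the fixed-point equation for $P(\mathbf{K}_t^{\pm})$, subtract to get exactly your identity $\mathbf{P}_t^{\pm}-P(\mathbf{K}_t^{\pm})=L_{\mathbf{K}_t^{\pm}}\bigl(\mathbf{P}_{t-1}-P(\mathbf{K}_t^{\pm})\bigr)$ (this is \eqref{eqn:temp1}--\eqref{eqn:temp2} in the paper), Taylor-expand around $\mathbf{K}_t$, and use $\mathbb{E}[\mathbf{\Delta}_t]=\mathbf{0}$ with independence of $\mathbf{\Delta}_t$ from $\mathbf{K}_t$ to annihilate the $O(d(t))$ terms linear in $\mathbf{\Delta}_t$. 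The genuine divergence is in the cross term $[L_{\mathbf{K}_t^+}-L_{\mathbf{K}_t^-}](\mathbf{P}_{t-1}-P(\mathbf{K}_t))$. The paper disposes of it sample-path-wise: it invokes the faster-timescale tracking result \eqref{eqn:convergence-in-faster-timescale}, continuity of $P(\cdot)$, and $a(t)\to 0$ to conclude $\|\mathbf{P}_{t-1}-P(\mathbf{K}_t)\|\to 0$, hence $O\bigl((\mathbf{P}_{t-1}-P(\mathbf{K}_t))d(t)\bigr)=o(d(t))$. You instead kill this term exactly in conditional expectation: since $L_{\mathbf{K}}$ is quadratic in $\mathbf{K}$, the difference $L_{\mathbf{K}_t^+}-L_{\mathbf{K}_t^-}$ is exactly linear in the fresh Rademacher matrix (the $d(t)^2$ quadratic pieces cancel), while $\mathbf{P}_{t-1}-P(\mathbf{K}_t)$ is $\mathcal{F}_{t-1}$-measurable, so orthogonality gives conditional mean zero, not merely $o(d(t))$. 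This buys you independence from the timescale-tracking argument for that term, and your explicit appeal to boundedness/uniform integrability to pass from conditional to unconditional expectation fills a step the paper leaves implicit (its a.s.\ estimates are converted to the expectation in the lemma without comment). You also justify the differentiability of $P(\cdot)$ via the smoothly parametrized, uniformly invertible Lyapunov equation $(\mathbf{I}-L_{\mathbf{K}})P(\mathbf{K})=b(\mathbf{K})$ on $\mathcal{K}$, whereas the paper invokes ``Taylor series expansions'' of $P(\mathbf{K}_t\pm d(t)\mathbf{\Delta}_t)$ with no stated smoothness beyond the Lipschitz property of Assumption~\ref{assumption:lipschitz-continuity}.

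Two minor blemishes, neither affecting correctness. First, your parenthetical attributing the $O(d(t)^2)$ bound on the type-(ii) terms to the tracking result is unnecessary: those terms are uniformly $O(d(t)^2)$ by compactness of the projected iterates and boundedness of the entries of $\mathbf{\Delta}_t$ alone; the tracking error plays no role there (it is needed only if one follows the paper's route for the cross term). Second, for completeness you should verify your classification on the second group: writing $P(\mathbf{K}_t)-P(\mathbf{K}_t^{\pm})=\mp d(t)\nabla P(\mathbf{K}_t)[\mathbf{\Delta}_t]+O(d(t)^2)$ and noting $L_{\mathbf{K}_t^+}+L_{\mathbf{K}_t^-}=2L_{\mathbf{K}_t}+O(d(t)^2)$, the pair $L_{\mathbf{K}_t^+}(P(\mathbf{K}_t)-P(\mathbf{K}_t^+))-L_{\mathbf{K}_t^-}(P(\mathbf{K}_t)-P(\mathbf{K}_t^-))$ reduces to $-2d(t)\,L_{\mathbf{K}_t}\bigl(\nabla P(\mathbf{K}_t)[\mathbf{\Delta}_t]\bigr)+O(d(t)^2)$, which is indeed a type-(i) term with $\mathcal{F}_{t-1}$-measurable coefficient, as your argument requires.
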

\begin{proof} Note that,
\footnotesize
\begin{eqnarray*}
\mathbf{P}_t^+ &=& (\mathbf{I}-\mathbf{K}_t^+ \mathbf{C})(\mathbf{A} \mathbf{P}_{t-1}\mathbf{A}'+\mathbf{Q})(\mathbf{I}-\mathbf{K}_t^+ \mathbf{C})'+\mathbf{K}_t^+ \mathbf{R} (\mathbf{K}_t^+)' \\
P(\mathbf{K}_t^+) &=& (\mathbf{I}-\mathbf{K}_t^+ \mathbf{C})(\mathbf{A} P(\mathbf{K}_t^+)\mathbf{A}'+\mathbf{Q})(\mathbf{I}-\mathbf{K}_t^+ \mathbf{C})'+\mathbf{K}_t^+ \mathbf{R} (\mathbf{K}_t^+)'
\end{eqnarray*}
\normalsize

By subtracting these two equations,  we obtain:
\footnotesize
\begin{equation}
\mathbf{P}_t^+-P(\mathbf{K}_t^+)=(\mathbf{I}-\mathbf{K}_t^+ \mathbf{C}) \mathbf{A} (\mathbf{P}_{t-1}-P(\mathbf{K}_t^+))\mathbf{A}' (\mathbf{I}-\mathbf{K}_t^+ \mathbf{C})' \label{eqn:temp1}
\end{equation}
\normalsize

Similarly, we can write:
\footnotesize
\begin{equation}
\mathbf{P}_t^- -P(\mathbf{K}_t^-)=(\mathbf{I}-\mathbf{K}_t^- \mathbf{C}) \mathbf{A} (\mathbf{P}_{t-1}-P(\mathbf{K}_t^-))\mathbf{A}' (\mathbf{I}-\mathbf{K}_t^- \mathbf{C})' \label{eqn:temp2}
\end{equation}
\normalsize

Subtracting \eqref{eqn:temp1} and \eqref{eqn:temp2}, using the Taylor series expansions of $P(\mathbf{K}_t^+)=P(\mathbf{K}_t+d(t)\mathbf{\Delta}_t)$ and $P(\mathbf{K}_t^-)=P(\mathbf{K}_t-d(t)\mathbf{\Delta}_t)$, noting that all iterates are bounded, and simplifying the expressions, we obtain:
\begin{eqnarray*}
&& \mathbf{P}_t^+-\mathbf{P}_t^- -P(\mathbf{K}_t^+)+P(\mathbf{K}_t^-)\\
&=& o(d(t))+O((\mathbf{P}_{t-1}-P(\mathbf{K}_t))d(t))+O(d(t)) \mathbf{\Delta}_t
\end{eqnarray*}
Note that, $\mathbf{P}_{t-1}-P(\mathbf{K}_t)=\mathbf{P}_{t-1}-P(\mathbf{K}_{t-1}+O(a(t-1)))$. Now, since $P(\cdot)$ is a continuous function and since $\lim_{t \rightarrow 0}||\mathbf{P}_{t-1}-P(\mathbf{K}_{t-1})||=0$ (as shown in Appendix~\ref{appendix:proof-of-convergence-of-SPSA}, Section~\ref{subsection:convergence-in-faster-timescale}), and since $\lim_{t \rightarrow 0} a(t-1)=0$, we can claim that $\lim_{t \rightarrow 0} ||\mathbf{P}_{t-1}-P(\mathbf{K}_t)||=0$. Hence, $O((\mathbf{P}_{t-1}-P(\mathbf{K}_t))d(t))=o(d(t))$. Hence, we can write;
\begin{eqnarray*}
 \mathbf{P}_t^+-\mathbf{P}_t^- -P(\mathbf{K}_t^+)+P(\mathbf{K}_t^-)= o(d(t))+O(d(t)) \mathbf{\Delta}_t
\end{eqnarray*}
The $O(d(t))$ term  depends on $\mathbf{K}_t$, which is independent of $\mathbf{\Delta}_t$. Hence, by taking expectation and trace on both sides, and noting that $\mathbb{E} (\mathbf{\Delta}_t)=\mathbf{0}$, the lemma is proved.
\end{proof}

\begin{lemma}\label{lemma:slower-timescale-second-lemma}
Under SEC with $b(t)=0$ for all $t=0,1,2,\cdots$, we have:

\footnotesize
\begin{eqnarray*}
 \mathbb{E} [ \max_{\mathcal{B} \in 2^{\mathcal{N}}: |\mathcal{B}|=n_0} ||\mathbf{\hat{x}}_{\mathcal{B}}^+(t)-\mathbf{\hat{x}}_{\mathcal{B}^c}^+(t)||^2 ] &=&
g(\mathbf{K}_t^+) +o(d(t))\\
 \mathbb{E} [ \max_{\mathcal{B} \in 2^{\mathcal{N}}: |\mathcal{B}|=n_0} ||\mathbf{\hat{x}}_{\mathcal{B}}^-(t)-\mathbf{\hat{x}}_{\mathcal{B}^c}^-(t)||^2 ] &=& g(\mathbf{K}_t^-) +o(d(t)) 
\end{eqnarray*}
\normalsize
\end{lemma}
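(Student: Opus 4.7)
The plan is to mimic the exponential-convergence argument used for $\mathbf{P}_t$ in Appendix~A, Section~\ref{subsection:convergence-in-faster-timescale}, and extend it to the proxy state estimates $\mathbf{\hat{x}}_{\mathcal{B}}^{\pm}(t)$ and $\mathbf{\hat{x}}_{\mathcal{B}^c}^{\pm}(t)$. First, I would observe that, for any fixed $t$, the estimate $\mathbf{\hat{x}}_{\mathcal{B}}^{+}(t)$ is obtained by iterating the affine recursion \eqref{eqn:proxy-estimate} $t$ times with the \emph{constant} gain matrix $\mathbf{K}_{t,\mathcal{B}}^{+}$ applied to the observation history $\{\mathbf{y}(\tau):1\le\tau\le t\}$, starting from $\mathbf{\hat{x}}^{(0)}=\mathbf{\hat{x}}(0)$; the same holds for $\mathbf{\hat{x}}_{\mathcal{B}^c}^{+}(t)$ with gain $\mathbf{K}_{t,\mathcal{B}^c}^{+}$. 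Since the projection step in SEC forces $\mathbf{K}_t\in\mathcal{K}$ and, for $d(t)$ small enough, also $\mathbf{K}_t^{\pm}\in[-l,l]^{q\times m}\cap\mathcal{K}$, the dynamics matrix driving these iterations has spectral radius uniformly bounded by some $1-\delta'<1$ on the compact projected set.

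Second, I would argue that under a stationary attack and a frozen gain matrix $\mathbf{K}$, the joint process $\{(\mathbf{x}(\tau),\mathbf{\hat{x}}_{\mathcal{B}}(\tau),\mathbf{\hat{x}}_{\mathcal{B}^c}(\tau))\}_{\tau\ge 0}$ is a geometrically ergodic time-homogeneous Markov chain whose unique stationary distribution depends continuously on $\mathbf{K}$. This is precisely the setup underlying the definition of $g(\mathbf{K})$ in Section~\ref{subsection:convergence-in-faster-timescale}. Consequently, there exist constants $C<\infty$ and $\rho\in(0,1)$, independent of the particular $\mathbf{K}$ in the compact set, such that
\[
\bigl|\,\mathbb{E}\bigl[\max_{\mathcal{B}}\|\mathbf{\hat{x}}_{\mathcal{B}}^{+}(t)-\mathbf{\hat{x}}_{\mathcal{B}^c}^{+}(t)\|^2 \,\big|\, \mathbf{K}_t^{+}\bigr]-g(\mathbf{K}_t^{+})\,\bigr| \le C\rho^t,
\]
because running $t$ steps of the frozen chain with gain $\mathbf{K}_t^{+}$ drives the joint distribution of $(\mathbf{\hat{x}}_{\mathcal{B}}^{+}(t),\mathbf{\hat{x}}_{\mathcal{B}^c}^{+}(t))$ to within geometric distance $\rho^t$ of its $\mathbf{K}_t^{+}$-stationary law.

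Third, the step-size conditions (iii)--(iv) force $d(t)\to 0$ only polynomially in $t$ (for instance $d(t)=0.1/t^{0.1}$ as in the simulations), whereas $\rho^t$ decays geometrically; hence $\rho^t=o(d(t))$. Taking the outer expectation over $\mathbf{K}_t^{+}$, with $o(d(t))$ interpreted uniformly over the compact projected set, yields $\mathbb{E}[\max_{\mathcal{B}}\|\mathbf{\hat{x}}_{\mathcal{B}}^{+}(t)-\mathbf{\hat{x}}_{\mathcal{B}^c}^{+}(t)\|^2]=g(\mathbf{K}_t^{+})+o(d(t))$. The identical argument with $\mathbf{K}_t^{-}$ replacing $\mathbf{K}_t^{+}$ gives the second identity.

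The main technical obstacle is establishing geometric ergodicity of the frozen joint chain with a rate $\rho$ that is \emph{uniform} over the compact gain set, even in the presence of an arbitrary stationary attack. The uniform spectral contraction bound $1-\delta'$ (from the projection onto $\mathcal{K}$) controls the decay of the initial condition, and Assumption~\ref{assumption:lipschitz-continuity}, together with the stationarity of the attack distribution $p(\cdot\mid \mathbf{y}_\tau:\tau<t)$, guarantees that the stationary expected anomaly $g(\mathbf{K})$ is well-defined and sufficiently regular in $\mathbf{K}$ so that the $O(\rho^t)$ bound can be applied pointwise and then absorbed into $o(d(t))$.
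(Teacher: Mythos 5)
Your proposal takes essentially the same route as the paper's own proof: freeze the perturbed gain $\mathbf{K}_t^{\pm}$ (which is exactly how SEC computes $\mathbf{\hat{x}}_{\mathcal{B}}^{\pm}(t)$ via \eqref{eqn:proxy-estimate}), observe that under a stationary attack the resulting re-filtered estimates form a time-homogeneous Markov chain converging to its stationary law exponentially fast, and absorb the geometric error into $o(d(t))$ because $d(t)$ decays subgeometrically. Your additions — the uniform spectral bound $1-\delta'$ for the unprojected perturbed gains, the joint chain formulation, and the explicit comparison $\rho^t=o(d(t))$ — merely make explicit what the paper leaves implicit, so this is a clarified version of the same argument rather than a different one.
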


\begin{proof}
 Let us consider a specific time instant $t=T$, and the corresponding gain matrix $\mathbf{K}_T^+$. 

Let us assume that   a constant gain matrix $\mathbf{K}_{T,\mathcal{B}}^+$ is used for state estimation $\mathbf{\hat{x}}_{\mathcal{B}}^+(t)$ for all time instants  $t =0,1,2,\cdots$. 
\begin{equation*}
\mathbf{\hat{x}}_{\mathcal{B}}^+(t)=(\mathbf{I}-\mathbf{K}_{T,\mathcal{B}}^+ C)A \mathbf{\hat{x}}_{\mathcal{B}}^+(t-1) + \mathbf{K}_{T,\mathcal{B}}^+ \mathbf{y}(t) 
\end{equation*}
Clearly, $\{\mathbf{\hat{x}}_{\mathcal{B}}^+(t)\}_{t \geq 0}$ is a time-homogeneous Markov process in this case, which reaches  its steady state distribution exponentially fast.  Hence, if a gain matrix $\mathbf{K}_{T,\mathcal{B}}^+$ is used to estimate $\{\mathbf{\hat{x}}_{\mathcal{B}}^+(t)\}_{t \geq 1}$ for all subsets $\mathcal{B}$ of size $n_0$, and if a gain matrix $\mathbf{K}_{T,\mathcal{B}^c}^+$ is used to estimate $\{\mathbf{\hat{x}}_{\mathcal{B}^c}^+(t)\}_{t \geq 1}$ for all subsets $\mathcal{B}^c$ of size $N-n_0$, and   these gain matrices are derived from the constant gain matrix $\mathbf{K}_T^+$, then we can write $\mathbb{E} [ \max_{\mathcal{B} \in 2^{\mathcal{N}}:|\mathcal{B}|=n_0} ||\mathbf{\hat{x}}_{\mathcal{B}}^+(T)-\mathbf{\hat{x}}_{\mathcal{B}^c}^+(T)||^2 ] =
g(\mathbf{K}_T^+) +o(d(T))$. This proves the first part of the lemma. The proof of the second part of the lemma follows similarly.
\end{proof}

From Lemma~\ref{lemma:slower-timescale-first-lemma} and Lemma~\ref{lemma:slower-timescale-second-lemma}, we can claim that $\mathbb{E}[c^+(t)-c^-(t)]=C(\mathbf{K}_t^+)-C(\mathbf{K}_t^-)+o(d(t))$.

Hence, by using the results from Lemma~\ref{lemma:slower-timescale-first-lemma}, Lemma~\ref{lemma:slower-timescale-second-lemma}, the iteration \eqref{eqn:SPSA-update-equation} can   be rewritten as:

\footnotesize
\begin{eqnarray*}
&& \mathbf{K}_{t+1}(i,j) = \bigg[\mathbf{K}_t(i,j)-a(t) \times \\
&& \frac{C(\mathbf{K}_t^+)- C(\mathbf{K}_t^-)+M^+(t)-M^-(t)+o(d(t))  }{2 d(t) \mathbf{\Delta}_t(i,j) }\bigg]_{[-l.l]^{q \times m} \cap \mathcal{K}} 
\end{eqnarray*}
\normalsize

Hence, by the discussion in \cite[Section~$2.2$]{borkar08stochastic-approximation-book}, the proof of \cite[Appendix~E, Section~C]{chattopadhyay-etal15measurement-based-impromptu-deployment-arxiv-v1} and the results from \cite{spall92original-SPSA}, one can claim that the effect of $o(d(t))$ is asymptotically negligible, and the iterates of \eqref{eqn:SPSA-update-equation} almost surely converge to the set 
 $\mathcal{K}_{\lambda}:=\{ \mathbf{K}:\nabla_{\mathbf{K}} (C(\mathbf{K}))=0 \}\cap [-l,l]^{q \times m}\cap \mathcal{K}$. This completes the proof of the theorem.

\section{Proof of Theorem~\ref{theorem:convergence-of-SPSA-varying-Lagrange-multiplier}}\label{appendix:proof-of-convergence-of-SPSA-varying-Lagrange-multiplier} 
Note that, the $\lambda(t)$ iteration runs at a slower timescale compared to the $\mathbf{K}_t$ iteration. 

We first note a few things. For any fixed $\lambda \in [0,l]$ used in SEC, we have $\lim_{t \rightarrow \infty}|\mbox{Tr}(\mathbf{P}_t)-P(\mathbf{K}_t)|=0$, as shown in the proof of Theorem~\ref{theorem:convergence-of-SPSA}.  However, $\mathbf{K}_t$ iterates converge to a set $\mathcal{K}_{\lambda}$ for a fixed $\lambda$. Also, $\nabla_{\mathbf{K}} C(\mathbf{K})=\nabla_{\mathbf{K}} g(\mathbf{K})+\lambda \nabla_{\mathbf{K}} P(\mathbf{K})$ is Lipschitz continuous in $\lambda \in [0,l]$, with a uniform Lipschitz constant for all $K \in [-l,l]^{q \times m}$.

Hence, by the discussion of \cite[Section~$2.2$]{borkar08stochastic-approximation-book} and \cite[Chapter~$6$]{borkar08stochastic-approximation-book}, we can say that, under SEC, $D(\mathbf{K}_t ,\mathcal{K}_{\lambda(t)}) \rightarrow 0$ almost surely ($D(\cdot)$ is the distance of a point from a set). Also,  the update $\lambda(t+1)=[\lambda(t)+b(t)(\mbox{Tr}(\mathbf{P}_t)-\bar{P})]_0^l$ asymptotically behaves like $\lambda(t+1)=[\lambda(t)+b(t)(P(\mathbf{K}_t)-\bar{P})]_0^l$. Hence,  by the discussion of \cite[Section~$2.2$]{borkar08stochastic-approximation-book} and \cite[Chapter~$6$]{borkar08stochastic-approximation-book}, we can say that the $\lambda(t)$ update asymptotically evolves according to  the following  stochastic recursive inclusion (see \cite[Chapter~$5$]{borkar08stochastic-approximation-book}) $$\lambda(t+1)\in \{ [\lambda(t)+b(t)(P(\mathbf{K})-\bar{P})]_0^l : \mathbf{K} \in \mathcal{K}_{\lambda(t)}  \}$$

Now we apply the theory of \cite[Chapter~$5$]{borkar08stochastic-approximation-book}  to analysis this stochastic recursive inclusion. Note that, the iteration can as well be represented as:
$$\lambda(t+1)\in \{ [\lambda(t)+b(t)x]_0^l : x \in \mathcal{S}_{\lambda(t)}  \}$$  
where $\mathcal{S}_{\lambda(t)}$ is the closure of the convex hull of the set $\{P(\mathbf{K})-\bar{P}:  \mathbf{K} \in \mathcal{K}_{\lambda(t)} \}$. Clearly, $\mathcal{S}_{\lambda}$ is a closed, convex set for each $\lambda \in [0,l]$; also, $P(\mathbf{K})$ is bounded  since the entries of all possible gain matrices considered in SEC are bounded. Hence,  Conditions~(i) and (ii) of \cite[Section~$5.1$]{borkar08stochastic-approximation-book} are satisfied.  

Now, let us assume that $\lambda(t) \rightarrow \lambda^*$ and $r(t) \rightarrow r^*$, where $r(t) \in \mathcal{S}_{\lambda(t)}$. Note that, $\mathcal{S}_{\lambda(t)}$ is the closure of the convex hull of $\{P(\mathbf{K})-\bar{P}: \mathbf{K} \in [-l,l]^{q \times m}, |\lambda_{max}(\mathbf{I}-\mathbf{K} \mathbf{C})| \leq 1-\delta, \nabla_{\mathbf{K}} C(\mathbf{K},\lambda(t))=0 \}$. Since $\nabla_{\mathbf{K}} C(\mathbf{K},\lambda)$ is a continuous function of $\lambda$, we can easily say that $r^* \in \mathcal{S}_{\lambda^*}$. Hence, Condition~(iii) of \cite[Section~$5.1$]{borkar08stochastic-approximation-book} is satisfied. 

Hence, by \cite[Chapter~$5$, Corollary~$4$]{borkar08stochastic-approximation-book}, we can say that $\lambda(t)$ in SEC almost surely converges to a closed connected internally chain transitive invariant set $\bar{\Lambda}$ of the following differential inclusion:
$$\dot{\lambda}(\tau)\in \mathcal{S}_{\lambda(\tau)}, \tau \in \mathbb{R}^+$$

Since, we have already argued that $||\mathbf{K}_t -\mathcal{K}_{\lambda(t)}|| \rightarrow 0$ almost surely, we can say that $(\mathbf{K}_t,\lambda_t) \rightarrow \{(\mathbf{K},\lambda): \mathbf{K} \in \mathcal{K}_{\lambda}, \lambda \in \bar{\Lambda}\}$. This concludes the proof.

{\small
\bibliographystyle{unsrt}
\bibliography{arpan-techreport}
}

%

\end{document}